%
\documentclass[runningheads]{llncs}
\usepackage{graphicx}
%

\usepackage{sansmathaccent}
\usepackage{amsmath}
\usepackage{amssymb}
\usepackage{adjustbox}
\usepackage{stmaryrd}
\usepackage{mathtools}
\usepackage{enumerate}
\usepackage{comment}
\pdfmapfile{+sansmathaccent.map}

\usepackage[T1]{fontenc}
\IfFileExists{luximono.sty}{\usepackage[scaled=0.9]{luximono}}{\usepackage[scaled=0.81]{beramono}}
\usepackage{xcolor}
\usepackage{todonotes}
\usepackage{xspace}
\usepackage{semantic}
\usepackage[normalem]{ulem} 

\makeatletter
\def\operator#1{\@ifnextchar\bgroup {\operatorarg{\ensuremath{#1}}}{\ensuremath{#1}}}
\def\operatorarg#1#2{{#1}{\ensuremath{(#2)}}}
\def\spoperator#1#2{\@ifnextchar\bgroup{\spoperatorarg{\ensuremath{#1}}{\ensuremath{#2}}}{\ensuremath{#1}}}
\def\spoperatorarg#1#2#3{\ensuremath{#1#2#3}}

\def\fixedoperator#1{\@ifnextchar\bgroup {\fixedoperatorarg{#1}}{\ensuremath{#1}}}
\def\fixedoperatorarg#1#2{\fixedoperatorparse{#1}#2~}
\def\fixedoperatorparse#1#2,#3~{\ensuremath{{#2}{.}{#1}{(#3)}}}
\makeatother

\newskip \point \point =1pt
\setbox134=\hbox{\leavevmode\raise0\point\hbox{${\langle}\kern-2.5\point{\langle}$}}
\setbox135=\hbox{\raise0\point\hbox{${ \rangle}\kern-2.5\point{ \rangle}$}}

\setbox136=\hbox{\leavevmode\raise0\point\hbox{${\lfloor}\kern-3.0\point{\lfloor}$}}
\setbox137=\hbox{\raise0\point\hbox{${ \rfloor}\kern-3.0\point{ \rfloor}$}}

\setbox138=\hbox{\leavevmode\raise0\point\hbox{${[}\kern-1.5\point{[}$}}
\setbox139=\hbox{\raise0\point\hbox{${]}\kern-1.5\point{]}$}}


\definecolor{darkred}{rgb}{0.55, 0.0, 0.0}


\newcommand\secref[1]{Sec.~\ref{#1}} 
\newcommand\appref[1]{App.~\ref{#1}}
\newcommand\defref[1]{Def.~\ref{#1}}
\newcommand\lemref[1]{Lemma~\ref{#1}}

\newcommand\tabref[1]{Tab.~\ref{#1}}
\newcommand\figref[1]{Fig.~\ref{#1}}
\newcommand\thmref[1]{Thm.~\ref{#1}}
\renewcommand\eqref[1]{(\ref{eq:#1})}
\newcommand\assref[1]{Assumption~\ref{#1}}

\newcommand\saferef[1]{\defref{def:conf-safe}.#1}
\newcommand\itref[1]{\textit{(\ref{item:#1})}}


\newcommand{\eg}{{{e.g.\@}}}

\newcommand{\ie}{{{i.e.\@}}}
\newcommand{\cf}{{{cf.\@}}} 

\newcommand{\TLA}{{{TLA$^{+}$}}}


\newcommand{\EndDef}{\hfill$\triangle$}
\newcommand{\Math}[1]{\ensuremath{#1}\xspace}
\newcommand{\defeq}{\Math{\triangleq}}
\newcommand{\ddefeq}{~\defeq~}
\newcommand{\Setfont}[1]{\Math{\textnormal{\textbf{#1}}}}
\newcommand{\Nats}{\Math{\mathbb{N}}}

\newcommand{\Set}[1]{\left\{\, #1 \,\right\}}
\newcommand{\fpto}{{}\rightharpoonup_{\text{fin}}{}}
\newcommand{\dom}[1]{\Math{\textit{dom}\left(#1\right)}}
\newcommand{\mydot}{~{}.{}~}
\newcommand{\undefined}{\textnormal{undefined}}
\newcommand{\tif}{~\textnormal{if}~}

\newcommand{\qiff}{\quad\textnormal{iff}\quad}

\newcommand{\qand}{\quad\textnormal{and}\quad}
\newcommand{\tand}{~\textnormal{and}~}

\newcommand{\qimplies}{\quad\textnormal{implies}\quad}

\newcommand{\eeq}{~{}={}~}

\newcommand{\cceq}{~{}::={}~}

\newcommand{\EE}{\Math{E}}

\newcommand{\Append}[2]{\Math{#2\, \texttt{++}\, #1}}

\newcommand{\Val}{\Math{v}}
\newcommand{\Vals}{\Setfont{Vals}}

\newcommand{\Var}{\Math{x}}
\newcommand{\Vars}{\Setfont{Vars}}

\newcommand{\BB}{\EE}
\newcommand{\True}{\texttt{true}}
\newcommand{\False}{\texttt{false}}

\newcommand{\Lock}{\Math{\mathcal{L}}}
\newcommand{\CLock}{\Math{\mathcal{R}}}
\newcommand{\Locks}{\Setfont{Locks}}
\newcommand{\GhostLock}{\Math{\mathcal{G}}}
\newcommand{\InitLock}{\Math{\mathcal{I}}}

\newcommand{\DLocks}[1]{\Math{\Locks(#1)}}

\newcommand{\Command}{\Math{C}}
\newcommand{\CC}{\Command}
\newcommand{\SimpleCommand}{\Command_{\text{base}}}
\newcommand{\Commands}{\Setfont{Cmds}}

\newcommand{\CommandFont}[1]{\Math{\textnormal{\texttt{#1}}}}
\newcommand{\Skip}{\CommandFont{skip}}
\newcommand{\Assign}[2]{#1~\CommandFont{:=}~#2}
\newcommand{\Deref}[1]{\Math{[#1]}}
\newcommand{\Mutation}[2]{\Assign{\Deref{#1}}{#2}}
\newcommand{\Lookup}[2]{\Assign{#1}{\Deref{#2}}}
\newcommand{\New}[2]{\Assign{#1}{\CommandFont{new}(#2)}}
\newcommand{\Free}[1]{\CommandFont{free}(#1)}
\newcommand{\Seq}{\Math{\mathrel{;}}}
\newcommand{\Par}{\parallel}
\newcommand{\If}[1]{\CommandFont{if}~#1~}
\newcommand{\Else}{~\CommandFont{else}~}
\newcommand{\Ite}[3]{\If{#1}\Block{#2}\Else\Block{#3}}
\newcommand{\While}[1]{\CommandFont{while}~#1~}
\newcommand{\Block}[1]{\{\, #1 \,\}}
\newcommand{\Print}[1]{{\CommandFont{print}}(#1)}
\newcommand{\Next}{{\CommandFont{Next}}}
\newcommand{\Initialize}{{\CommandFont{Init}}}
\newcommand{\NextBlock}[1]{\Next~\Block{#1}}
\newcommand{\InitializeBlock}[1]{\Initialize~\Block{#1}}
\newcommand{\DeclareLock}[1]{\CommandFont{lock}~#1~}
\newcommand{\DeclareLockBlock}[2]{\DeclareLock{#1}\Block{#2}}
\newcommand{\With}[2]{\CommandFont{with}~#1~\CommandFont{when}~#2~}
\newcommand{\WithBlock}[3]{\With{#1}{#2}\Block{#3}}
\newcommand{\Within}[1]{{\CommandFont{within}}~#1~}
\newcommand{\WithinBlock}[2]{\Within{#1}\Block{#2}}

\newcommand{\Locs}{\Setfont{Addrs}}
\newcommand{\Loc}{\Math{a}}
\newcommand{\ALocs}{\Setfont{GhostAddrs}}

\newcommand{\Stack}{\Math{s}}
\newcommand{\Heap}{\Math{h}}

\newcommand{\HPerm}[1]{\Math{\textsf{perm}(#1)}}
\newcommand{\HVal}[1]{\Math{\textsf{val}(#1)}}

\newcommand{\Stacks}{\Setfont{Stacks}}
\newcommand{\Heaps}{\Setfont{NormHeaps}}
\newcommand{\EmptyHeap}{\Math{\Heap_{\emptyset}}}
\newcommand{\Add}{\Math{\oplus}}

\newcommand{\Predicate}[2]{\Math{\textit{#1}\left(#2\right)}}
\newcommand{\isDefined}[1]{\Predicate{def}{#1}}

\newcommand{\Abort}{\CommandFont{abort}}

\newcommand{\Configuration}{\Math{c}}
\newcommand{\Confs}{\Setfont{Confs}}
\newcommand{\Conf}[3]{#1, #2, #3}

\newcommand{\Rule}[1]{(\textnormal{\textsc{#1}})\xspace}

\newcommand{\Sem}[1]{#1}

\newcommand{\Update}[2]{\Math{\left[#1\,\textnormal{\texttt{:=}}\,#2\right]}}
\newcommand{\Delete}[1]{\Update{#1}{\bot}}

\newcommand{\stdout}{\Math{\textnormal{\texttt{stdOut}}}}
\newcommand{\OutVar}{\stdout}

\newcommand{\Locked}[1]{\Setfont{locked}(#1)}
\newcommand{\Reads}[2]{\Setfont{accesses}(#1, #2)}
\newcommand{\Writes}[2]{\Setfont{writes}(#1, #2)}

\newcommand{\OpRel}{{}\Math{\to}{}}
\newcommand{\NOpRel}{{}\Math{\not\to}{}}

\newcommand{\OpRelTrans}{{}\Math{\to^{*}}{}}

\newcommand{\isAtomic}[1]{\Predicate{atomic}{#1}}
\newcommand{\isInit}[1]{\textit{init}\left(#1\right)}

\newcommand{\InitialConfs}{\Setfont{I}}
\newcommand{\Traces}[1]{\Math{\Setfont{Traces(}#1\Setfont{)}}}

\newcommand{\Emptyseq}{\Math{\left[ ~ \right]}}

\newcommand{\Obs}[1]{\textit{obs}(#1)}

\newcommand{\PermsSym}{\Setfont{Perms}}
\newcommand{\Perms}{[0,1]}
\newcommand{\PermAdd}{\oplus}
\newcommand{\PermWrite}{1}

\newcommand{\PermNone}{0}
\newcommand{\Perm}{\rho}

\newcommand{\PHeaps}{\Setfont{Heaps}}
\newcommand{\PHeap}{\Math{\mathfrak{h}}}

\newcommand{\SL}{\Setfont{SL}}
\newcommand{\FOL}{\Setfont{FOL}}
\newcommand{\Assertion}[1]{\Math{#1}}
\newcommand{\PP}{\Assertion{P}}
\newcommand{\PQ}{\Assertion{Q}}
\newcommand{\PR}{\Assertion{R}}
\newcommand{\PI}{\Assertion{I}}

\newcommand{\emp}{{}\textnormal{\texttt{emp}}{}}
\newcommand{\acc}[2]{{}#1 \xhookrightarrow{#2} \_{}}
\newcommand{\alloc}[2]{{}#1 \xmapsto{#2} \_{}}
\newcommand{\pt}[3]{{}#1 \xmapsto{#2} #3{}}
\newcommand{\apt}[3]{{}#1 \xhookrightarrow{#2} #3{}}

\newcommand{\Implies}{{}\Rightarrow{}}
\newcommand{\sep}{{}\mathrel{*}{}}
\newcommand{\mw}{-\kern-.6em\raisebox{-.659ex}{*}\ }
\newcommand{\ex}[1]{\exists #1 \mydot}
\newcommand{\fa}[1]{\forall #1 \mydot}

\newcommand{\valid}{\Math{\models}}

\newcommand{\Absfont}[1]{\Math{\textnormal{\textit{#1}}}}
\newcommand{\AbsTS}{\Absfont{ATS}}
\newcommand{\AbsLength}{\Bbbk}
\newcommand{\AbsVars}{\vec{x}}
\newcommand{\AbsVar}[1]{x_{#1}}
\newcommand{\AbsInit}{\Absfont{Init}}
\newcommand{\AbsNext}{\Absfont{Next}}
\newcommand{\Primed}[1]{#1'}
\newcommand{\Abs}[1]{\left\lfloor #1\right\rfloor}
\newcommand{\AbsLoc}[1]{\Abs{x_{#1}}}
\newcommand{\Aabsvars}[1]{\Abs{\AbsVars}\!(#1)}
\newcommand{\AbsLocsOfName}{\textit{get\_state}}
\newcommand{\AbsLocsOf}[1]{\AbsLocsOfName(#1)}
\newcommand{\AbsLocs}{\Abs{x_1},\ldots,\Abs{x_\AbsLength}}

\newcommand{\AbsStates}{\Math{\Sigma_{\AbsTS}}}
\newcommand{\AbsState}{\sigma}

\newcommand{\AbsPaths}{\Setfont{Paths}(\AbsTS)}
\newcommand{\AbsTraces}{\Traces{\AbsTS}}

\newcommand{\FV}[1]{\Setfont{FV}(#1)}
\newcommand{\Mod}[1]{\Setfont{Mod}(#1)}
\newcommand{\subst}[2]{[#1/#2]}

\newcommand{\LockDef}[2]{\Math{#1\colon #2}}
\newcommand{\LockEnv}{\Math{\Gamma}}
\newcommand{\LockEnvG}{\Math{\Gamma,\LockDef{\GhostLock}{\GhostInv}}}

\newcommand{\Ghost}[1]{\Math{{#1}}}
\newcommand{\GhostInv}{\Ghost{G}}

\newcommand{\refineIndicator}{\mathcal{R}}
\newcommand{\triple}[5]{#2\:\vdash\:\left\{ #3 \right\}\:#4\:\left\{ #5 \right\}}
\newcommand{\rtriple}[5]{#2\:\vdash_{\refineIndicator}\:\left\{ #3 \right\}\:#4\:\left\{ #5 \right\}}

\newcommand{\striple}[3]{\triple{}{\LockEnv}{#1}{#2}{#3}}
\newcommand{\srtriple}[3]{\rtriple{}{\LockEnv}{#1}{#2}{#3}}

\newcommand{\semtriple}[5]{#2\models_{#1}\left\{ #3 \right\}#4\left\{ #5 \right\}}
\newcommand{\rsemtriple}[4]{#1\models_{\refineIndicator}\left\{ #2 \right\}#3\left\{ #4 \right\}}

\newcommand{\Safe}[7]{\Math{\textsf{safe}_{#2}\llbracket\,#1\,\rrbracket\left(#3,#4,#5,#6,#7\right)}}

\newcommand{\Succ}{\Math{\textsf{succ}}}

\newcommand{\RSucc}{\Math{\textsf{refsucc}}}
\newcommand{\HeapF}{\Math{\PHeap_{\textnormal{F}}}}
\newcommand{\HeapS}{\Math{\PHeap_{\textnormal{S}}}}

\newcommand{\SoundInitialsFor}[1]{\Setfont{I}(#1,\PP,\LockEnv)}
\newcommand{\SoundInitials}{\Setfont{I}(\CC,\PP,\LockEnv)}

\spnewtheorem{assumption}{Assumption}{\bfseries}{\rmfamily}

\begin{document}
\title{Flexible Refinement Proofs in Separation Logic}
%
%
\author{Aurel B\'il\'y \and
Christoph Matheja \and
Peter M{\"u}ller}
\authorrunning{A. B\'il\'y et al.}
%
\institute{Department of Computer Science, ETH Zurich, Switzerland\\
\email{\{aurel.bily, cmatheja, peter.mueller\}@inf.ethz.ch}}
\maketitle              
\begin{abstract}
Refinement transforms an abstract system model into a concrete, executable program, such that properties established for the abstract model carry over to the concrete implementation. Refinement has been used successfully in the development of substantial verified systems. Nevertheless, existing refinement techniques have limitations that impede their practical usefulness. Some techniques generate executable code automatically, which generally leads to implementations with sub-optimal performance. Others employ bottom-up program verification to reason about efficient implementations, but impose strict requirements on the structure of the code, the structure of the refinement proofs, as well as the employed verification logic and tools.

In this paper, we present a novel refinement technique that removes these limitations. Our technique uses separation logic to reason about efficient concurrent implementations. It prescribes only a loose coupling between an abstract model and the concrete implementation. It thereby supports a wide range of program structures, data representations, and proof structures.  We make only minimal assumptions about the underlying program logic, which allows our technique to be used in combination with a wide range of logics and to be automated using off-the-shelf separation logic verifiers. We formalize the technique, prove the central trace inclusion property, and demonstrate its usefulness on several case studies.
\end{abstract}

\section{Introduction}\label{sec:introduction}
Refinement is a powerful technique for the formal development of correct systems. It is especially useful for concurrent and distributed systems, because it allows one to establish system-wide invariants on the level of abstract models and preserve them when decomposing the system into its components. It is, thus, not surprising that several recent developments of verified systems employ refinement reasoning~\cite{DeepSpecPosition,HawblitzelHKLPR15,KleinEHACDEEKNSTW09,Leroy06,LorchCKPQSWZ20}.

Traditionally, refinement is applied to mathematical models of software, for instance, in formalisms such as Event-B~\cite{Abrial10}, \TLA~\cite{Lamport2002}, or the higher-order logics supported by interactive theorem provers such as Coq~\cite{coq}. The final executable program is then produced automatically by a code generator. However, this approach generally leads to sub-optimal implementations that do not fully utilize the language features needed to produce efficient code, such as mutable heap structures and concurrency. Manually optimizing the generated code would forfeit the correctness guarantees provided by the formal development.

To address these shortcomings, recent work has combined refinement with bottom-up program verification techniques that support more features of modern programming languages~\cite{HawblitzelHKLPR15,Koh0LXBHMPZ19,Igloo,Trillium}. While these approaches substantially increase the range of programs that can be developed using refinement, they come with their own limitations. First, several existing techniques restrict the structure of the executable program, which reduces expressiveness and limits the efficiency of the executable code. For instance, the methodology used in IronFleet~\cite{HawblitzelHKLPR15} supports protocol-level concurrency, but restricts the implementations of individual components to execute sequentially. Moreover, the structure of the code must closely follow the structure of the abstract \TLA model. Similarly, the Igloo methodology~\cite{Igloo} does not allow threads to perform I/O operations concurrently. Second, most refinement techniques that support bottom-up code development are closely tied to a particular program logic, which often impedes adoption 
 and automation. For example, refinement in DeepSpec~\cite{Koh0LXBHMPZ19} is tied to the VST logic~\cite{CaoBGDA18}, DISEL~\cite{SergeyWT18} comes with its own dedicated logic, and Trillium~\cite{Trillium} leverages Iris~\cite{Jung2018}. All of these logics are very expressive, but require substantial manual effort. The underlying program logic may also impose limitations. For instance, Trillium inherits Iris's restriction to finitary behaviors, which precludes operations such as non-deterministically choosing from an infinite set. However, non-deterministic choice is essential for the specification of abstract system models, for instance, when the concrete algorithm to determine a value is an implementation decision that is taken in a later refinement step.

In this paper, we present a novel methodology to prove that an implementation refines an abstract model given as a transition system. Our methodology enables \emph{flexible} refinement proofs along four dimensions:

\begin{enumerate}
\item \emph{Abstract model.} We do not prescribe a specific formalism for abstract models, but support any transition system whose transition relation can be specified in first-order logic.

\item \emph{Program structure.} Our methodology uses separation logic~\cite{Reynolds2002} to support efficient implementations, in particular, mutable state and arbitrary concurrency structures.

\item \emph{Logic and automation.} We make only minimal assumptions about the underlying program logic, which allows our methodology to be used in combination with a wide range of logics and to be automated using off-the-shelf separation logic verifiers. 

\item \emph{Proof structure.} Our methodology prescribes only a very loose coupling between the abstract model and the concrete implementation. This maximizes flexibility when choosing the program structure (for instance, control flow, concurrency, and thread synchronization), the data representation (supporting for instance, local and shared state), and proof structure (for instance, allowing coupling invariants to be expressed via a combination of local assertions and lock invariants).
\end{enumerate}

\paragraph{Approach.}

Our goal is to prove that a given implementation refines an abstract model, that is, each finite trace of the abstract model corresponds to a trace in the transition system. This trace inclusion property guarantees that any safety property proved for the abstract model also holds for the implementation.

Our abstract models are expressed as (possibly infinite) transition systems in the style of \TLA, Event-B, or other refinement frameworks. We assume that some variables of the abstract transition system (\AbsTS) are declared to be part of the environment in which the system executes. These are the variables that can be observed about an execution; that is, we require trace inclusion only after projecting traces to the environment variables. Consequently, a concrete implementation may perform arbitrary internal operations as long as it manipulates the environment according to \AbsTS. A typical example for environment variables are the input and output streams of a program, but our methodology also supports other cases such as the variables shared between threads.  We assume that a program manipulates the environment only via dedicated operations, such as the methods of an I/O library, whose specifications express how they manipulate the environment in terms of the environment variables of \AbsTS.

To reason about \AbsTS and the concrete implementation within one program logic, we embed \AbsTS as global ghost state into the concrete implementation: each variable of \AbsTS is represented as a ghost variable of the program. 
The concrete program must implement the transitions of \AbsTS via atomic program operations and corresponding updates of the ghost state. We identify these operations via annotations in the code; they include in particular the operations used to manipulate the environment. Proof obligations enforce that each such operation performs a valid transition of \AbsTS and that they execute atomically. We do not prescribe how these atomicity checks are performed: they are trivial for sequential programs, could be performed syntactically by identifying a set of atomic operations (such as compare-and-swap and native I/O operations), could be ensured via a global locking strategy, or could be discharged in a logic that can reason about atomicity, such as TaDA~\cite{PintoDG14}. This approach allows us to use any separation logic that is able to reason about shared state, in particular, standard separation logic~\cite{Reynolds2002}, concurrent separation logic~\cite{OHearn04}, and the numerous separation logics for fine-grained concurrency, e.g.,~\cite{Dinsdale-YoungDGPV10,PintoDG14,SergeyNB15}.

A key virtue of our methodology is that refinement proofs are organized around a minimal core, namely the proof obligations showing that the program manipulates the environment variables according to the abstract model. Discharging these proof obligations generally requires suitable coupling invariants between the abstract and the concrete state, as well as updates to the ghost state to maintain them. However, we do not prescribe how to express and prove those, which enables programmers to flexibly structure the code, data, and proof.

\paragraph{Contributions.}

We make the following technical contributions:

\begin{itemize}
\item We present a novel verification methodology for refinement proofs that provides more flexibility than prior work in terms of the supported class of programs, the choice of verification logic and tools, and the organization of the refinement proof itself (\secref{sec:overview}). 

\item We formalize an instance of our methodology for a concurrent language and a standard separation logic. Our soundness proof shows trace inclusion, which implies that safety properties of the abstract model also hold for the concrete implementation (\secref{sec:preliminaries} and \secref{sec:methodology}).

\item We demonstrate the expressiveness of our methodology by encoding a series of interesting examples into the Viper language~\cite{MuellerSchwerhoffSummers16}. Our evaluation shows in particular that our methodology can be automated with existing separation logic verifiers (\secref{sec:evaluation}). We will submit our examples as an artifact.
\end{itemize}

\section{Overview}\label{sec:overview}
In this section, we explain our methodology on a simple example that prints all integers in ascending order, starting from an arbitrary, non-negative initial value. To illustrate the flexibility of our methodology, we refine an abstract model into a concurrent implementation. The concrete state is stored in local variables of the individual threads, which requires a non-trivial, decentralized coupling invariant. This section provides an informal overview; the details of the programming language and proof rules will be formalized in the next sections.

\begin{figure}[b]
\[
\begin{array}{l}
\texttt{Vars:} \;	\texttt{count: Int,~stdOut: Seq[Int]}\\
\texttt{Init:} \;0 \leq 	\texttt{count} \wedge 	\texttt{stdOut} = []\\
\texttt{Next:} \;	\texttt{stdOut}' = 	\texttt{stdOut ++ [count]} \wedge
\texttt{count}' = 	\texttt{count} + 1
\end{array}
\]
\caption{Abstract model of our example. Here, unprimed and primed variables refer the variables values before and after the transition, respectively.}
\label{fig:example-abstract}
\end{figure}

\paragraph{Abstract Model.}
The abstract model of our system is defined in \figref{fig:example-abstract}. We represent the output stream of the system \texttt{stdOut} as a sequence of integers. This variable belongs to the environment, whereas \texttt{count} is internal to the system. The initial state has an arbitrary non-negative value for \texttt{count}, which illustrates the common case that abstract models choose values non-deterministically. A valid transition 
adds the current value of \texttt{count} to the output stream and then increments it.

\paragraph{Embedding of the Abstract Model.}
As explained in the introduction, we embed the abstract model as ghost state into the concrete implementation. To this end, we declare a global ghost variable:

\texttt{ghost var count: Int}

\noindent
We assume that the environment variable \texttt{stdOut} is predeclared for each program. It is manipulated via a \texttt{print(x)} statement, which requires ownership of \texttt{stdOut} (in the sense of separation logic: $\alloc{\texttt{stdOut}}{}$) and appends its argument \texttt{x} to the output stream. Other input and output channels, as well as other I/O operations are handled analogously.

To refine the abstract model, the concrete implementation may manipulate \texttt{count} and \texttt{stdOut} only according to the abstract model presented in \figref{fig:example-abstract}. We enforce this requirement by protecting both variables with a \emph{ghost lock}. Like a regular lock, a ghost lock is equipped with a lock invariant that must be established when the lock is initialized and must be preserved by the operations between an acquire and a release operation. In contrast to a regular lock, a ghost lock is not present at run time; that is, it does not block execution and, thus, cannot cause deadlock. Erasing ghost locks from the program is sound because the operations between an acquire and a release must execute atomically.

In our language, we initialize the ghost lock with a dedicated \texttt{Init} ghost statement, which checks the \texttt{Init} assertion from the abstract model as well as the lock invariant (in particular, it transfers ownership of the variables in the lock invariant from the executing thread to the ghost lock). A dedicated \texttt{Next} ghost block statement acquires the ghost lock, executes the block of the statement, and then releases the ghost lock. Upon release, it checks that the \texttt{Next} assertion from the abstract model holds between the state in which the ghost lock was acquired and the state where it is released. That is, the \texttt{Next} statement executes \emph{one} transition of the abstract model (or stutters). The block of a \texttt{Next} statement must be atomic; we enforce this requirement syntactically, that is, we check that the body consists of at most one atomic statement plus an arbitrary number of ghost operations. However, our methodology also supports more sophisticated approaches, for instance, logics that can prove atomicity~\cite{PintoDG14}. Note that both \texttt{Init} and \texttt{Next} are ghost statements, that is, part of the specification. The executable program contains their blocks, but not the manipulation of the ghost lock. They are supported by any separation logic that handles locks.

The invariant of the ghost lock must contain at least fractional ownership~\cite{Boyland03} to each variable of the abstract model, which ensures that once the ghost lock has been initialized, those variables can be modified \emph{only} within a \texttt{Next} statement and, thus, the \texttt{Next} relation of the abstract model is checked on each modification. In our example, the ghost lock invariant contains fractional ownership of the \texttt{count} variable and full ownership of \texttt{stdOut}:
\[
\alloc{\texttt{count}}{\frac{2}{3}} ~\sep~
\alloc{\texttt{stdOut}}{}
\]
Our proof rules enforce that \texttt{print}---the only way to modify $\texttt{stdOut}$---can be executed only after the ghost lock has been initialized with an \texttt{Init} statement. Consequently, the above lock invariant guarantees that \texttt{print} operations can occur only within a \texttt{Next} block and, thus, all changes to the environment are checked to comply with the abstract model. We assume that \texttt{print} is atomic, which is standard for I/O operations.

\begin{figure}[t]
\begin{align*}
 & \Assign{\texttt{count}}{0}\Seq \\[-1mm]
 & \Initialize~\{ \\[-1mm]
 & \qquad \Assign{\texttt{evenTurn}}{\True}\Seq
          \Assign{\texttt{lastEven}}{-1}\Seq
          \Assign{\texttt{lastOdd}}{0}\Seq \\[-1mm]
 & \qquad \DeclareLock{\Lock} \{ ~
        \text{/* even-thread */} \,\Par\, \text{/* odd-thread */ }~ \} \\[-1mm]
 & \}
\end{align*}
\vspace{-6mm}
\caption{The concrete implementation of our example. The even-thread is presented in \figref{fig:example-even}; the odd-thread is analogous.}
\label{fig:example-main}
\end{figure}

\figref{fig:example-main} shows the concrete implementation. After initializing the abstract state, the \texttt{Init} operation checks that the initial state is valid and creates the ghost lock that guards further updates to the abstract state, in particular, all executions of \texttt{print}. Note that our implementation fixes a concrete initial value for \texttt{count}, whereas the abstract model permits an arbitrary non-negative number (see \figref{fig:example-abstract}). Resolving non-determinism is very common during refinement. We explain the body of the \texttt{Init} statement below.

\paragraph{Concrete Program State and Coupling.}

Our concrete implementation uses two concurrent threads (parallel branches) that print the even and odd numbers, respectively. To synchronize these two threads, we declare a global Boolean variable that indicates whether the next number to be printed is even:

\texttt{var evenTurn: Bool}

\noindent
This variable is protected by a global (regular, non-ghost) lock $\Lock$. 

\begin{figure}[t]
\begin{align*}
  & \texttt{var}~\Assign{\texttt{c}}{0}\Seq \\[-3mm]
  & \While{\True}~\texttt{invariant~lastEven} \xmapsto{\frac{1}{2}} 2 * \texttt{c} - 1\\[-1mm]
  & \{ \\[-1mm]
  & \qquad\With{\Lock}{\texttt{evenTurn}}~\{ \\[-1mm]
  & \qquad\qquad\Next~\{\quad\texttt{/*}~\text{print + ghost update}~\texttt{*/} \\[-1mm]
  & \qquad \qquad\qquad \Print{2 * \texttt{c}}\Seq 
    \Assign{\texttt{count}}{\texttt{count}+1}~\}\quad \\[-1mm]
  & \qquad\qquad \Assign{\texttt{lastEven}}{\texttt{lastEven} + 2}\Seq \\[-1mm]
  & \qquad\qquad \Assign{\texttt{evenTurn}}{\texttt{!evenTurn}}\Seq \\[-1mm]
  & \qquad\qquad \Assign{\texttt{c}}{\texttt{c}+1}~\\[-1mm]
  & \qquad\}\quad\texttt{/*}~\text{release lock \Lock}~\texttt{*/} \\[-1mm]
  &  \}
\end{align*}
\vspace{-6mm}
\caption{Implementation of the thread that prints the even numbers. The \texttt{Next} block marks an atomic transition; its body is atomic because the \texttt{print} statement is atomic and the subsequent assignment is a ghost statement.}
\label{fig:example-even}
\end{figure}

\figref{fig:example-even} shows the implementation of the thread that prints the even numbers; the other thread is analogous. Both threads loop indefinitely. In each iteration, they acquire the lock $\Lock$ (waiting until \texttt{evenTurn} is true resp.\ false), execute a transition (explained later), flip \texttt{evenTurn}, and release the lock.
Each thread has a local variable \texttt{c} that counts how many numbers it has printed. This design illustrates that our methodology supports flexible combinations of global concrete state (such as \texttt{evenTurn}) and local concrete state (such as \texttt{c}). While global concrete state can easily be connected to the abstract state via lock invariants, local variables require more flexible ways of expressing the coupling invariant. 

For instance, at the beginning of each loop iteration of the even-thread, the abstract counter \texttt{count} is equal to $2*\texttt{c}$ in case \texttt{evenTurn} is true. This condition allows us to prove that the \texttt{print} operation in the loop body is indeed permitted by the abstract model. However, this coupling invariant cannot be included in a lock invariant because locks do not protect local variables. Nor can it be expressed as a loop invariant because that would require that the even-thread holds on to some ownership of \texttt{count}, which would prevent the odd-thread from ever updating it.

A relation between the local variable \texttt{c} and the shared variable \texttt{count} can be proved in many ways, for instance, by using classical rely-guarantee reasoning~\cite{Jones81} or concurrent abstract predicates~\cite{Dinsdale-YoungDGPV10}. Our refinement methodology is compatible with any such logic. In our example, we use a standard encoding of the Owicki-Gries counter~\cite{OwickiG76} in separation logic. For this purpose, we introduce two global ghost variables \texttt{lastEven} and \texttt{lastOdd} that keep track of the effect of each individual thread:

\texttt{ghost var lastEven: Int, lastOdd: Int}

\noindent
We relate these ghost variables to the local variable \texttt{c} in each thread via the thread's loop invariant (see \figref{fig:example-even}), and also to the global \texttt{count} via the lock invariant of lock $\Lock$:
\[
\begin{array}{l}
\pt{\texttt{evenTurn}}{}{\_} \sep
\pt{\texttt{lastEven}}{\frac{1}{2}}{\_} \sep
\pt{\texttt{lastOdd}}{\frac{1}{2}}{\_} \sep
\pt{\texttt{count}}{\frac{1}{3}}{\_} \sep\\
(\texttt{evenTurn} \Rightarrow 	\texttt{count} = 	\texttt{lastOdd} \wedge 	\texttt{lastEven} = 	\texttt{lastOdd} - 1) \sep \\
(\neg \texttt{evenTurn} \Rightarrow 	\texttt{count} = 	\texttt{lastEven} \wedge 	\texttt{lastOdd} = 	\texttt{lastEven} - 1)
\end{array}
\]
\noindent
The lock and the loop invariants \emph{together} form the coupling invariant for our example. For the even-thread, we get $\texttt{lastEven} = 2*c-1$ from the loop invariant, \texttt{evenTurn} from the \texttt{with}-statement, and 
$(\texttt{evenTurn} \Rightarrow 	\texttt{count} = 	\texttt{lastOdd} \wedge 	\texttt{lastEven} = 	\texttt{lastOdd} - 1)$ from the lock invariant. These three conditions together imply $	\texttt{count}=2*c$, which is required to show that the printed value is permitted by the abstract model.

\paragraph{Discussion.}
Our example illustrates that our methodology enables flexible refinement proofs, which are required to support a wide range of efficient implementations. We refined an abstract model into a concurrent implementation that uses both local and mutable shared state, as well as thread synchronization via locks. The proof makes only minimal assumptions about the underlying program logic. Concretely, we use concurrent separation logic, locks, ghost variables, and fractional permissions. These features are supported and automated by many existing separation logic verifiers. For instance, an encoding of our example into Viper verifies automatically in around 3.8s. Combining our refinement methodology with other, more advanced program logics is possible. 

Finally, our example demonstrates that our methodology enables flexible proof structures. Proofs are essentially derived backwards from those statements that manipulate environment variables, here, \texttt{print}. These statements require that the abstract state has been initialized and that the modification of the environment variables is permitted by the abstract model. Any proof structure that establishes these properties is compatible with our methodology. In our example, we use a combination of loop invariants and lock invariants, connected via global ghost variables, to establish the necessary coupling relation.
This flexibility is essential to support a wide range of data, control, and concurrency structures.

\section{Preliminaries: Concurrent Separation Logic}\label{sec:preliminaries}
Our verification technique for refinement proofs does not depend on a particular 
program logic but can, in principle, be integrated into most
separation logic-based verification techniques.
To make this claim more concrete, we will formalize (in \secref{sec:methodology})
our methodology on top of an elementary formalization of concurrent separation logic with fractional permissions
by Vafeiadis~\cite{Vafeiadis11}.
Since Vafeiadis' soundness proof generalizes well to more
advanced concurrent separation logics, such as \cite{Dinsdale-YoungDGPV10}, we expect
the same when using our technique for refinement on top
of such advanced logics.

This section briefly recapitulates the main ingredients of concurrent separation logic (CSL).
More precisely, we introduce a small concurrent programming language,
its underlying model of program states, and its operational semantics.
Furthermore, we discuss CSL's assertion language and proof rules.

\subsection{Programming Language}
\label{sec:sound:language}
We consider a small programming language that supports heap-manipulating instructions,
structured concurrency, and locks.
More precisely, the set \Commands of \emph{commands} in our programming language is
given by the grammar

\noindent
\begin{minipage}{0.5\textwidth}
\begin{align*}
    \Command \quad\cceq\quad 
       & \SimpleCommand \\
    |~ & \Command \Seq \Command \\
    |~ & \If{\BB} \Block{ \Command } \Else \Block{ \Command } \\
    |~ & \While{\EE} \Block{\Command} \\
    |~ & \Command \Par \Command \\
    |~ & \DeclareLock{\Lock}\Block{\Command} \\
    |~ & \With{\Lock}{\EE} \Block{ \Command } \\
    |~ & \Within{\Lock} \Block{ \Command }  semantics
\end{align*}
\end{minipage}\hfill%
\begin{minipage}{0.45\textwidth}
\begin{align*}
    \SimpleCommand \quad\cceq\quad 
       & \Skip \\
    |~ & \Assign{\Var}{\EE} \\
    |~ & \Mutation{\EE}{\EE} \\
    |~ & \Lookup{\Var}{\EE} \\
    |~ & \Free{\EE} \\
    |~ & \New{\Var}{\EE} \\
       & \\
       & 
\end{align*}
\end{minipage}

\medskip
\noindent
where $x$ is a \emph{variable} in the set $\Vars$,
$\EE$ is an \emph{expression} over variables, 
and $\Lock$ is a \emph{lock identifier} taken from the arbitrary, but finite 
 set $\Locks = \Set{\Lock,\ldots}$.

We briefly go over our language: 
in addition to the assignment $\Assign{x}{\EE}$ and the usual control-flow structures, 
$\Command_1\Par\Command_2$ is the \emph{parallel composition} of $\Command_1$ and $\Command_2$;
$\DeclareLockBlock{\Lock}{\Command}$ \emph{declares a new lock} $\Lock$ that can be used in $\Command$ and expires after termination of $\Command$;
the \emph{conditional critical region} (CCR)
$\WithBlock{\Lock}{\EE}{\Command}$ acquires lock \Lock if condition $\EE$ holds (and waits otherwise), executes \Command, and releases \Lock
again upon termination of $\Command$;
$\WithinBlock{\Lock}{\Command}$ is an internal command that we will use to indicate that $\Command$ is executed while holding the lock $\Lock$.
Furthermore, \Deref{\EE} denotes the value at the memory address given by \EE.
$\Lookup{\Var}{\EE}$ \emph{reads} the value at address \EE and assigns it to \Var;
\Mutation{\EE}{\EE'} \emph{writes} the value of $\EE'$ to the address \EE.
Moreover, \Free{\EE} disposes of location \EE, and $\New{\Var}{\EE}$ allocates a free memory address, assigns it to \Var, and stores \EE at that address.
We use structured concurrency and global locks to simplify the formalization, but our methodology also supports dynamic threads and locks.

\subsection{Program States}
\label{sec:sound:states}
A \emph{program state} $(\Stack,\PHeap)$ consists of a \emph{stack} \Stack, \ie, a valuation
of variables, and a \emph{heap} \PHeap modeling dynamically allocated memory.
Formally, we fix a set $\Vals = \Set{\Val,\ldots}$ of \emph{values} containing, \eg, Booleans, integers, and sequences. The set \Stacks consists of all mappings \Stack from variables in $\Vars$
to values in $\Vals$, \ie,
\[ \Stacks \ddefeq \Vars \to \Vals~. \]
Moreover, we fix a countably infinite set $\Locs$ of \emph{memory addresses}, and the set $\PermsSym \defeq \Perms$ of \emph{fractional permissions}, where permission $\Perm = \PermWrite$ means write access, $\Perm \in (0,1)$ means read access, and $\Perm = \PermNone$ means no access, respectively.

The set \PHeaps of \emph{heaps} consists of all finite partial functions \PHeap that map addresses in their domain $\dom{\PHeap} \subseteq \Locs$ to permission-value pairs, \ie,
\begin{align*}
  \PHeaps \ddefeq \Locs \fpto (\PermsSym \times \Vals)~.
\end{align*}
For $\PHeap(\Loc) = (\Perm,\Val)$, we define the projections
$\HPerm{\PHeap(\Loc)} = \Perm$ and $\HVal{\PHeap(\Loc)} = \Val$.
We denote by $\Set{\Loc_1 \xmapsto{\Perm_1} \Val_1, \ldots, \Loc_n \xmapsto{\Perm_n} \Val_n}$ the heap $\PHeap$ with $\dom{\PHeap} = \Set{\Loc_1,\ldots,\Loc_n}$ and $\PHeap(\Loc_i) = (\Perm_i,\Val_i)$, where $i \in \Set{1,\ldots,n}$; the empty heap is denoted by $\EmptyHeap$.

The \emph{addition} of $\PHeap_1 \Add \PHeap_2$ of heaps $\PHeap_1$ and $\PHeap_2$ is defined as
\begin{align*}
  (\PHeap_1 \Add \PHeap_2)(\Loc) \ddefeq
  \begin{cases}
    (\Perm_1 + \Perm_2,\Val) & \tif \Loc \in \dom{\PHeap_1} \cap \dom{\PHeap_2} \tand \PHeap_1(\Loc) = (\Perm_1,\Val)  \\
                 & \qquad\qquad \tand \PHeap_2(\Loc) = (\Perm_2,\Val) \tand \Perm_1 + \Perm_2 \leq \PermWrite \\
   \PHeap_1(\Loc) & \tif \Loc \in \dom{\PHeap_1} \setminus \dom{\PHeap_2} \\
   \PHeap_2(\Loc) & \tif \Loc \in \dom{\PHeap_2} \setminus \dom{\PHeap_1} \\
   \undefined & \text{otherwise}. 
  \end{cases}
\end{align*}
We write $\isDefined{\PHeap_1 \Add \PHeap_2}$ if ($\PHeap_1 \Add \PHeap_2(a)$ is well-defined for all $a \in \dom{\PHeap_1} \cup \dom{\PHeap_2}$.
Furthermore,
$\PHeap\Update{\Loc}{\Val}$ denotes the heap $\PHeap$ except that $\Loc$ maps to $\Val$ (with permission $\PermWrite$), \ie,
\begin{align*}
  \PHeap\Update{\Loc}{\Val}(\Loc') \ddefeq
  \begin{cases}
    (\PermWrite, \Val) & \tif \Loc' = \Loc \\
    \PHeap(\Loc') & \tif \Loc' \neq \Loc~.
  \end{cases}
\end{align*}
For every stack $\Stack$, $\Stack\Update{\Var}{\Val}$ is defined analogously.
We define the heap
$\PHeap\Delete{\Loc} \defeq \Set{\Loc' \mapsto \PHeap(\Loc') \mid \Loc' \in \dom{\PHeap} \setminus \{\Loc\}}$, which is obtained from $\PHeap$ by removing $\Loc$ from its domain.
Finally, the set \Heaps of \emph{normal heaps} consists of all heaps $\Heap$ that assign the  full permission $\PermWrite$ to every address, \ie,
\begin{align*}
  \Heaps \ddefeq \Set{ \PHeap \in \PHeaps ~|~ \forall \Loc \in \dom{\PHeap} \mydot \HPerm{\PHeap(\Loc)} = \PermWrite }~.
\end{align*}
We typically write $\Heap$ instead of $\PHeap$ to highlight that a heap is normal; by slight abuse of notation, we use $\Heap(\Loc)$ as a shortcut for $\HVal{\Heap(\Loc)}$.
Clearly, for every heap $\PHeap$, there exists a heap $\PHeap'$ such that $\PHeap \Add \PHeap'$ is a normal heap.

\subsection{Operational Semantics}
\label{sec:sound:semantics}
The semantics of commands is defined in terms of
a small-step execution relation.

Toward a formal definition, we first clarify the semantics of expressions.
We do not fix a specific syntax for expressions; instead, we assume that every expression
 $\EE$ is associated with a function 
$\EE\colon \Stacks \to \Vals$
such that $\EE(\Stack)$ is the value obtained from evaluating $\EE$ in stack $\Stack$.

The set $\Confs$ of \emph{program configurations} \Configuration consists of all 
command-stack-normal-heap triples plus a dedicated error state \Abort, that is,
\begin{align*}
  \Confs \ddefeq (\Commands \times \Stacks \times \Heaps) ~\cup~ \Set{\Abort}~.
\end{align*}
We use normal heaps because permissions are a reasoning concept that does not exist when executing a program.
The \emph{small-step operational semantics} of commands is defined as the execution relation $\OpRel \subseteq \Confs \times \Confs$ given by the rules in \figref{fig:operational}.
\begin{figure}[tp]
\centering
\begin{tabular}{l@{\hskip 0.05in}l@{\hskip 0.05in}l}
\Rule{Assign}
&
$\Conf{\Assign{\Var}{\EE}}{\Stack}{\Heap}
\OpRel
\Conf{\Skip}{\Stack\Update{\Var}{\Sem{\EE}(\Stack)}}{\Heap}$
&
\\
\Rule{Read}
&
$\Conf{\Lookup{\Var}{\EE}}{\Stack}{\Heap}
\OpRel
\Conf{\Skip}{\Stack\Update{\Var}{\Heap(\Val)}}{\Heap}$
&
if $\Sem{\EE}(\Stack) = \Val \in \dom{\Heap}$
\\
\Rule{ReadA}
&
$\Conf{\Lookup{\Var}{\EE}}{\Stack}{\Heap} \OpRel \Abort$
&
if $\Sem{\EE}(\Stack) \notin \dom{\Heap}$
\\
\Rule{Write}
&
$\Conf{\Mutation{\EE}{\EE'}}{\Stack}{\Heap}
 \OpRel
 \Conf{\Skip}{\Stack}{\Heap\Update{\Loc}{\Sem{\EE'}(\Stack)}}$
&
if $\Sem{\EE}(\Stack) = \Loc \in \dom{\Heap}$
\\
\Rule{WriteA}
&
$\Conf{\Mutation{\EE}{\EE'}}{\Stack}{\Heap} \OpRel \Abort$
&
if $\Sem{\EE}(\Stack) \notin \dom{\Heap}$
\\
\Rule{Alloc}
&
$\Conf{\New{\Var}{\EE}}{\Stack}{\Heap}
 \OpRel
 \Conf{\Skip}{\Stack\Update{\Var}{\Loc}}{\Heap\Update{\Loc}{\Sem{\EE}(\Stack)}}$
&
if $\Loc \in \Locs \setminus \dom{\Heap}$ 
\\
\Rule{Free}
&
$\Conf{\Free{\EE}}{\Stack}{\Heap}
 \OpRel
 \Conf{\Skip}{\Stack}{\Heap\Delete{\Loc}}$
&
if $\Sem{\EE}(\Stack) = \Loc \in \dom{\Heap}$ 
\\
\Rule{FreeA}
&
$\Conf{\Free{\EE}}{\Stack}{\Heap}
 \OpRel \Abort$
&
if $\Sem{\EE}(\Stack) \notin \dom{\Heap}$
\\
\Rule{With}
&
$\Conf{\WithBlock{\Lock}{\EE}{\Command}}{\Stack}{\Heap}
 \OpRel
 \Conf{\WithinBlock{\Lock}{\Command}}{\Stack}{\Heap}$
&
if $\EE(\Stack) = \True$
\\
\Rule{Ite1}
&
$\Conf{\Ite{\BB}{\Command_1}{\Command_2}}{\Stack}{\Heap}
 \OpRel
 \Conf{\Command_1}{\Stack}{\Heap}$
&
if $\Sem{\BB}(\Stack) = \True$
\\
\Rule{Ite2}
&
$\Conf{\Ite{\BB}{\Command_1}{\Command_2}}{\Stack}{\Heap}
 \OpRel
 \Conf{\Command_2}{\Stack}{\Heap}$
&
if $\Sem{\BB}(\Stack) = \False$
\\
\Rule{While}
&
\multicolumn{2}{l}{
$\Conf{\While{\BB}\Block{\Command}}{\Stack}{\Heap}
 \OpRel
 \Conf{
   \Ite{\BB}{\Command\Seq\While{\BB}\Block{\Command}}{\Skip}
 }{\Stack}{\Heap}$
}
\\
\hline
\Rule{WithinL}
&
$\Conf{\WithinBlock{\Lock}{\Command}}{\Stack}{\Heap} \OpRel \Abort$
&
if $\Lock \in \Locked{\Command}$
\\
\Rule{WithinS}
&
\multicolumn{2}{l}{
  $\Conf{\WithinBlock{\Lock}{\Skip}}{\Stack}{\Heap} \OpRel \Conf{\Skip}{\Stack}{\Heap}$
  \quad
  \Rule{SeqS}~
  $\Conf{\Skip\Seq\Command_2}{\Stack}{\Heap}  \OpRel \Conf{\Command_2}{\Stack}{\Heap}$
}
\\
\Rule{LockS}
&
\multicolumn{2}{l}{
  $\Conf{\DeclareLockBlock{\Lock}{\Skip}}{\Stack}{\Heap} \OpRel \Conf{\Skip}{\Stack}{\Heap}$
  \quad
  \Rule{ParS}
  $\Conf{\Skip\Par\Skip}{\Stack}{\Heap} \OpRel \Conf{\Skip}{\Stack}{\Heap}$
}
\end{tabular}

\vspace{1em}

\begin{adjustbox}{center}
\begin{tabular}{c}
\inference[\Rule{Seq}]{
  \Conf{\Command_1}{\Stack}{\Heap}
  \OpRel
  \Conf{\Command_1'}{\Stack'}{\Heap'}
}{
  \Conf{\Command_1\Seq\Command_2}{\Stack}{\Heap}
  \OpRel
  \Conf{\Command_1'\Seq\Command_2}{\Stack'}{\Heap'}
}
\qquad
\inference[\Rule{SeqA}]{
  \Conf{\Command_1}{\Stack}{\Heap}
  \OpRel
  \Abort
}{
  \Conf{\Command_1\Seq\Command_2}{\Stack}{\Heap}
  \OpRel
  \Abort
}
\\[1.5em]
\inference[\Rule{Par1}]{
  \Conf{\Command_1}{\Stack}{\Heap}
  \OpRel
  \Conf{\Command_1'}{\Stack'}{\Heap'}
  \\
  \Locked{\Command_1'} \cap \Locked{\Command_2} \eeq \emptyset
}{
  \Conf{\Command_1\Par\Command_2}{\Stack}{\Heap}
  \OpRel
  \Conf{\Command_1'\Par\Command_2}{\Stack'}{\Heap'}
}
\quad
\inference[\Rule{Par1A}]{
  \Conf{\Command_1}{\Stack}{\Heap}
  \OpRel
  \Abort
}{
  \Conf{\Command_1\Par\Command_2}{\Stack}{\Heap}
  \OpRel
  \Abort
}
\\[1.5em]
\inference[\Rule{Par2}]{
  \Conf{\Command_2}{\Stack}{\Heap}
  \OpRel
  \Conf{\Command_2'}{\Stack'}{\Heap'}
  \\
  \Locked{\Command_1} \cap \Locked{\Command_2'} \eeq \emptyset
}{
  \Conf{\Command_1\Par\Command_2}{\Stack}{\Heap}
  \OpRel
  \Conf{\Command_1'\Par\Command_2}{\Stack'}{\Heap'}
}
\quad
\inference[\Rule{Par2A}]{
  \Conf{\Command_2}{\Stack}{\Heap}
  \OpRel
  \Abort
}{
  \Conf{\Command_1\Par\Command_2}{\Stack}{\Heap}
  \OpRel
  \Abort
}
\\[1.5em]
\inference[\Rule{Race}]{
    \left(\Reads{\Command_1}{\Stack} \cap \Writes{\Command_2}{\Stack}\right)
    \,\cup\,
    \left(\Writes{\Command_1}{\Stack} \cap \Reads{\Command_2}{\Stack}\right) \neq \emptyset
}{
  \Conf{\Command_1\Par\Command_2}{\Stack}{\Heap}
  \OpRel
  \Abort
}
\\
\inference[\Rule{Lock}]{
  \Conf{\Command}{\Stack}{\Heap} \OpRel \Conf{\Command'}{\Stack'}{\Heap'}
}{
  \Conf{\DeclareLockBlock{\Lock}{\Command}}{\Stack}{\Heap}
  \OpRel
  \Conf{\DeclareLockBlock{\Lock}{\Command'}}{\Stack'}{\Heap'}
}
\quad
\inference[\Rule{LockA}]{
  \Conf{\Command}{\Stack}{\Heap} \OpRel \Abort
}{
  \Conf{\DeclareLock{\Lock}\Block{\Command}}{\Stack}{\Heap}
  \OpRel
  \Abort
}
\\[1.5em]
\inference[\Rule{Within}]{
  \Conf{\Command}{\Stack}{\Heap}
  \OpRel
  \Conf{\Command'}{\Stack'}{\Heap'}
}{
  \Conf{\WithinBlock{\Lock}{\Command}}{\Stack}{\Heap}
  \OpRel
  \Conf{\WithinBlock{\Lock}{\Command'}}{\Stack'}{\Heap'}
}
\\[1.5em]
\inference[\Rule{WithinA}]{
  \Conf{\Command}{\Stack}{\Heap} \OpRel \Abort
}{
  \Conf{\WithinBlock{\Lock}{\Command}}{\Stack}{\Heap}
  \OpRel
  \Abort
}
\end{tabular}
\end{adjustbox}

\caption{Rules of the small-step operational semantics for commands in $\Commands$.
Here, $\Locked{\Command}$ is the set of all locks $\Lock$ held by $\Command$, \ie,
$\WithinBlock{\Lock}{\ldots}$ is a sub-command of $\Command$.
Furthermore, $\Reads{\Command}{\Stack}$ (resp. $\Writes{\Command}{\Stack}$) denotes the 
set of those addresses $\Loc$ that are not exclusively owned by $\Command$, \ie, appear
only in sub-commands $\NextBlock{\ldots}$ or $\WithBlock{\Lock}{\EE}{\ldots}$), and 
are accessed (resp. modified) by \Command given stack $\Stack$.
}
\label{fig:operational}
\end{figure}
Most rules in \figref{fig:operational} are standard (\cf~\cite{Vafeiadis11}) and reflect the behavior informally explained in \secref{sec:sound:language}; we briefly discuss particularities.

\paragraph{Usage and Declaration of Locks.}
The operational semantics uses the command structure to record which locks are currently declared and acquired, respectively.
That is, $\DeclareLockBlock{\Lock}{\CC}$ indicates that \Lock is declared in \CC; the internal command $\WithinBlock{\Lock}{\CC}$ indicates that \CC holds $\Lock$ during its execution.
We denote by $\DLocks{\CC}$ the set of all locks that are declared in \CC.
Analogously, \Locked{\CC} denotes the set of all locks that are currently held by \CC.
For example, if $\CC'$ is \emph{not} of the form $\DeclareLockBlock{\Lock}{\ldots}$ or $\WithinBlock{\Lock}{\ldots}$, and 
$\CC$ is given by
\begin{align*}
  \DeclareLockBlock{\Lock_1}{
    \DeclareLockBlock{\Lock_2}{
        \ldots
        \DeclareLockBlock{\Lock_n}{
            \WithinBlock{\Lock_1'}{
                \ldots
                \WithinBlock{\Lock_m'}{
                    \CC'
                }
            }
        }
    } \ldots
  }~,
\end{align*}
then $\DLocks{\CC} = \Set{ \Lock_1, \ldots, \Lock_n}$
and $\Locked{\CC} = \Set{ \Lock_1', \ldots, \Lock_m' }$.

The semantics of $\DeclareLockBlock{\Lock}{\CC}$ consequently keeps the lock $\Lock$ declared and performs a step of $\CC$ until termination; after that, the lock declaration expires.

\paragraph{CCRs and Parallel Composition.}
For CCRs $\WithBlock{\Lock}{\EE}{\CC}$, we first check whether condition \EE holds in the current state. 
If so, we acquire lock $\Lock$ and enter the CCR by moving to the internal command $\WithinBlock{\Lock}{\CC}$, which will execute \CC, and release the lock upon termination of \CC.
If \EE does not hold, no transition is possible, \ie, we need to wait for other threads.

The rules for the parallel composition $\CC_1 \Par \CC_2$ model all interleaved executions of $\CC_1$ and $\CC_2$.
They include a sanity check stating that $\CC_1$ and $\CC_2$ do not hold the same lock at the same time, \ie, locks provide mutual exclusion.
Moreover, there is a rule that prematurely aborts executions whenever a command admits a data race, that is, $\CC_1$ (resp.\ $\CC_2$) writes to and $\CC_2$ (resp.\ $\CC_1$) accesses (i.e., reads from or writes to) the same address that is \emph{outside} of a CCR, and thus without protection by a lock.

\subsection{Assertions}
\label{sec:sound:assertions}

\paragraph{Syntax.}
The syntax of \emph{separation logic assertions}
includes, amongst others,
all Boolean expressions $\EE$ supported by our programming language
Formally, the set $\SL$ of separation logic assertions is given by the grammar
\begin{align*}
    \PP \cceq
        \EE ~|~ &  \PP \wedge \PP ~|~ 
\neg \PP ~|~ 
         \fa{\Var} \PP ~|~ \ex{\Var} \PP \tag{FOL} \\
    |~ & \emp ~|~ \pt{\EE}{\Perm}{\EE}
   ~|~  \PP \sep \PP ~|~ \PP \mw \PP~ ~|~ \bigstar_{i \in I}\,\PP_i~, 
\tag{SL}
\end{align*}
where $\EE$ is a Boolean expression over variables,
$\Var$ is a variable in $\Vars$,
$\Perm$ is a permission in $\PermsSym$,
and $I$ is a finite set such that each $i \in I$ is associated with an assertion $\PP_i$.
We use syntactic sugar, such as
$\PP \Implies \PQ$, $\PP \vee \PP$ and, in particular,
$\apt{\EE}{\Perm}{\EE} \defeq \pt{\EE}{\Perm}{\EE} \sep \True$, $\alloc{\EE}{\Perm} \defeq \ex{y} \apt{\EE}{\Perm}{y}$, and
$\acc{\EE}{\Perm} \defeq \ex{y} \apt{\EE}{\Perm}{y}$.

We denote by $\FOL$ the set of all first-order logic formulas, \ie, those
formulas that can be constructed from the first line of the above grammar.

\paragraph{Semantics.}
Assertions are interpreted over pairs $(\Stack,\PHeap)$ consisting of a stack \Stack and a heap (with permissions) \PHeap.
\tabref{tab:assertion-semantics} shows the formal semantics of assertions.

Intuitively, $\emp$ specifies the empty heap;
the points-to assertion $\pt{\EE}{\Perm}{\EE'}$ specifies that the heap contains 
\emph{exactly one} address $\EE$ that is mapped to $\EE'$ with permission $\Perm$; 
$\apt{\EE}{\Perm}{\EE'}$ is an intuitionistic version stating that the heap contains 
\emph{at least} permission $\Perm$ for the address $\EE$, which is mapped to $\EE'$.
$\alloc{\EE}{\Perm}$ and $\acc{\EE}{\Perm}$ are analogous but do not require a specific
value at the address given by $\EE$.
The separating conjunction $\PP \sep \PQ$ specifies that the heap can be partitioned into two parts such that one part satisfies \PP and the other part satisfies \PQ.
$\bigstar_{i \in I}\,\PP_i$ is an iterative version of the separating conjunction.
Finally, the magic wand $\PP \mw \PQ$ specifies that \PQ holds in a heap \PHeap after it has been extended by any heap that satisfies \PP (and can be added to \PHeap).

We call an assertion $\PP$ \emph{valid}, written $\valid \PP$, if and only if
$\Stack,\PHeap \models \PP$ holds for all stacks $\Stack \in \Stacks$ and heaps $\PHeap \in \PHeaps$.

{\renewcommand{\arraystretch}{1.2}
\begin{table}[t]
\centering
\caption{Semantics of assertions.}
\label{tab:assertion-semantics}
\begin{tabular}{l@{\hskip 0.25in}l}
  $\PP$ & $\Stack, \PHeap \models \PP \qiff$ \\
  \hline\hline
  $\EE$ & \Sem{\EE}(s) = \True \\
  $\PQ \wedge \PR$ & $\Stack, \PHeap \models \PQ$ and $\Stack, \PHeap \models \PR$ \\
  $\neg\PQ$ & $\Stack, \PHeap \not\models \PQ$ \\
  $\fa{\Var} \PQ$ & for all $\Val \in \Vals$, $\Stack\Update{\Var}{\Val},\PHeap \models \PQ$ \\
  $\ex{\Var} \PQ$ & exists $\Val \in \Vals$ s.t. $\Stack\Update{\Var}{\Val},\PHeap \models \PQ$ \\
  $\emp$ & $\dom{\PHeap} = \emptyset$ \\
  $\PQ \sep \PR$ & exists $\PHeap_1,\PHeap_2$ s.t. $\PHeap = \PHeap_1 \PermAdd \PHeap_2$ 
  and $\Stack,\PHeap_1 \models \PQ$ and $\Stack,\PHeap_2 \models \PR$ \\
  $\PQ \mw \PR$ & for all $\PHeap'$, ($\isDefined{\PHeap \Add \PHeap'}$ and $\Stack, \PHeap' \models \PQ$) implies
    $\Stack,\PHeap \PermAdd \PHeap' \models \PR$  \\
    $\bigstar_{i \in I}\,\PP_i$ & $I = \emptyset$ or exists $j \in I$ s.t. $\Stack,\PHeap \models \PP_j \sep \bigstar_{i \in I \setminus \{j\}}\,\PP_i$
\\
  \hline \hline
\end{tabular}
\end{table}
}

We denote by $\FV{\PP}$ the set of \emph{free variables} (\ie, those that are not bound by quantifiers) of assertion $\PP$.
Moreover, $\PP\subst{\Var}{\Val}$ denotes the \emph{substitution} of every free occurrence of 
variable $\Var$ in assertion $\PP$ by $\Val$.

\subsection{Proof System}
\label{sec:sound:logic}
The last step of our recapitulation of concurrent separation logic (CSL)
presents the formal triples and proof rules for reasoning about commands.
To formalize both, we first introduce lock environments and lock invariants.

\paragraph{Lock Environments.}
In contrast to the operational semantics (\cf~\secref{sec:sound:semantics}), which has a global view on the full program state, the CSL proof system considers only the local state \ie, those parts of the current program state that are accessible to the local command.
The remainder of the global state is either framed around the command or protected by locks.
To this end, we associate every declared lock $\Lock$ with a \emph{lock invariant} $\PR$ that specifies the part of the global state that is protected by $\Lock$.
Formally, the assignment of lock invariants to locks is 
captured by a \emph{lock environment} $\LockEnv$, which is given by the grammar
\begin{align*}
    \LockEnv \cceq &
    \emptyset ~|~ \LockEnv, \LockDef{\Lock}{\PR}~,
\end{align*}
where $\Lock \in \Locks$ and $\PR$ is an $\SL$ assertion representing the lock invariant.
Moreover, given a lock environment $\LockEnv$,
the \emph{lock invariant of lock $\Lock$} is 
\begin{align*}
 \LockEnv(\Lock) \ddefeq
 \begin{cases}
    \PR & \tif \LockEnv = \LockEnv', \LockDef{\Lock}{\PR} \\
    \LockEnv'(\Lock) & \tif \LockEnv = \LockEnv', \LockDef{\Lock'}{\PR'} \tand \Lock' \neq \Lock  \\
    \emp & \tif \LockEnv = \emptyset~.
 \end{cases}
\end{align*}
Intuitively, if $\LockEnv(\Lock) = \PR$, then $\PR$ specifies a portion of the global state that is not part of the local state but is shared with other threads and can be modified by them.
In particular, if $\LockEnv(\Lock) = \emp$, then $\Lock$ does not appear in $\LockEnv$ (or the declared invariant is $\emp$) and nothing is shared with other threads.

Declaring, acquiring, and releasing the lock $\Lock$ can be understood as a transfer of the portion of the state specified by $\PR$ between the local state and \LockEnv, \ie, the part of the global state that is shared with other threads:
whenever we declare \Lock, \PR is shared with other threads and thus moved from the local state into \LockEnv; once the declaration expires, \PR is transferred back into the local state. 
Whenever we acquire the lock $\Lock$, \PR is moved from $\LockEnv$ into the local state; whenever we release the lock $\Lock$, $\PR$ is moved back from the local state into $\LockEnv$.

\paragraph{CSL Judgments.}
Judgments that can be derived in CSL are of the form
\begin{align*}
  \striple{\PP}{\CC}{\PQ}~,
\end{align*}
where
\LockEnv is a lock environment such that $\bigstar_{\Lock \in \Locks} \LockEnv(\Lock)$
describes the shared state,
$\PP \in \SL$ is the precondition evaluated in the local state,
\CC is a command in \Commands, and
$\PQ \in \SL$ is the postcondition evaluated in the local state.
\figref{fig:proof-rules} shows the standard proof rules of CSL~\cite{Vafeiadis11}.
In particular, the aforementioned transfer between local and shared state can be observed in the rules \Rule{Lock} and \Rule{With}.
A detailed discussion of the CSL proof rules is found in~\cite{OHearn04,Brookes07}. 
\begin{figure}[p]
  \centering
  \begin{tabular}{c}
\inference[\Rule{Skip}]{}{
  \striple{\PP}{\Skip}{\PP}
}
{\hskip 1.5cm}
\inference[\Rule{Assign}]{
 \Var \notin \FV{\LockEnv}
}{
  \striple{
      \PP\subst{\Var}{\EE}
  }{
      \Assign{\Var}{\EE}
  }{\PP}
}
\\[1.5em]
\inference[\Rule{Write}]{}{
    \striple{
        \alloc{\EE}{\PermWrite}
    }{
        \Mutation{\EE}{\EE'}
    }{
        \pt{\EE}{\PermWrite}{\EE'} 
    }
}
\\[2em]
\inference[\Rule{Read}]{
    \Var \notin \FV{\EE,\EE',\LockEnv}
}{
    \striple{
        \pt{\EE}{\Perm}{\EE'}
    }{
        \Lookup{\Var}{\EE}
    }{\pt{\EE}{\Perm}{\EE'} \wedge \Var = \EE'}
}
\\[1.5em]
\inference[\Rule{Alloc}]{
    \Var \notin \FV{\LockEnv,\EE}
}{
    \striple{\emp}{
        \New{\Var}{\EE}
    }{
        \pt{\Var}{\PermWrite}{\EE} 
    }
}
\\[1.5em]
\inference[\Rule{Free}]{
    \EE \notin \ALocs
}{
    \striple{
        \alloc{\EE}{\PermWrite}
    }{
        \Free{\EE}
    }{
        \emp
    }
}
\\[2em]
\inference[\Rule{Seq}]{
    \striple{\PP}{\Command_1}{\PR}
    &
    \striple{\PR}{\Command_2}{\PQ}
}{
    \striple{
        \PP
    }{
        \Command_1\Seq\Command_2
    }{
        \PQ
    }
}
\\[1.5em]
\inference[\Rule{Cond}]{
    \striple{\PP \wedge \EE}{\Command_1}{\PQ}
    &
    \striple{\PP \wedge \neg\EE}{\Command_2}{\PQ}
}{
    \striple{
        \PP
    }{
        \If{\EE}\Block{\Command_1}\Else\Block{\Command_2}
    }{
        \PQ
    }
}
\\[1.5em]
\inference[\Rule{While}]{
    \striple{\PI \wedge \EE}{\Command}{\PI}
}{
    \striple{
        \PI
    }{
        \While{\EE}\Block{\Command}
    }{
        \PI \wedge \neg\EE
    }
}
\\[1.5em]
\inference[\Rule{Par}]{
    \striple{\PP_1}{\Command_1}{\PQ_1}
    &
    \FV{\PP_1, \Command_1, \PQ_1} \cap \Mod{\Command_2} = \emptyset
    \\
    \striple{\PP_1}{\Command_2}{\PQ_2}
    &
    \FV{\PP_2, \Command_2, \PQ_2} \cap \Mod{\Command_1} = \emptyset
}{
    \striple{
        \PP_1 \sep \PP_2
    }{
        \Command_1\Par\Command_2
    }{
        \PQ_1 \sep \PQ_2
    }
}
\\[1.5em]
\inference[\Rule{Lock}]{
  \triple{}{\LockEnv,\LockDef{\Lock}{\PR}}{
    \PP
  }{
    \Command
  }{
    \PQ
  }
}{
  \triple{}{\LockEnv}{
    \PR \sep \PP
  }{
      \DeclareLock{\Lock}\Block{\Command}
  }{
      \PR \sep \PQ 
  }
}
\\[1.5em]
\inference[\Rule{With}]{
  \striple{(\PP \sep \PR) \wedge \EE}{\Command}{\PQ \sep \PR}
}{
  \triple{}{\LockEnv,\LockDef{\Lock}{\PR}}
    {\PP}{\WithBlock{\Lock}{\EE}{\Command}}{\PQ}
}
\\[1.5em]
\inference[\Rule{Frame}]{
  \striple{\PP}{\Command}{\PQ} 
  &
  \FV{\PR} \cap \Mod{\Command} = \emptyset
}{
  \striple{\PP \sep \PR}{\Command}{\PQ \sep \PR}
}
\\[1.5em]
\inference[\Rule{Cons}]{
  \striple{\PP}{\Command}{\PQ} 
  \\
  \valid \PP' \Implies \PP
  \qquad \valid
  \PQ \Implies \PQ'
}{
  \striple{\PP'}{\Command}{\PQ'}
}
\qquad
\inference[\Rule{Ex}]{
  \striple{\PP}{\Command}{\PQ} 
  &
  \Var \notin \FV{\Command}
}{
  \striple{\ex{\Var} \PP}{\Command}{\PQ}
}
\\[1.5em]
\inference[\Rule{Conj}]{
  \forall \Lock \mydot \LockEnv(\Lock)~\text{precise}
  \\
  \striple{\PP_1}{\Command}{\PQ_1}
  \\
  \striple{\PP_2}{\Command}{\PQ_2}
}{
  \striple{\PP_1 \wedge \PP_2}{\Command}{\PQ_1 \wedge \PQ_2}
}
\qquad
\inference[\Rule{Disj}]{
  \striple{\PP_1}{\Command}{\PQ_1}
  \\
  \striple{\PP_2}{\Command}{\PQ_2}
}{
  \striple{\PP_1 \vee \PP_2}{\Command}{\PQ_1 \vee \PQ_2}
}
\end{tabular}

  \caption{Inference rules inherited from concurrent separation logic. 
           Here, $\Perm$ is a permission; $\Mod{\Command}$ denotes all variables
           modified by command \CC, \ie, those that appear on the left-hand side of 
           assignments. Moreover, $\FV{\EE}$, $\FV{\CC}$, and $\FV{\LockEnv}$ denote the 
           free variables of in expression $\EE$, command \CC (\ie, all variables accessed by \CC),
           and lock environment \LockEnv (\ie, the union of $\FV{\LockEnv(\Lock)}$
           for all $\Lock \in \Locks$), respectively.
           Furthermore,  we define
           $\FV{A,B,C} \defeq \FV{A} \cup \FV{B} \cup \FV{C}$.
    Finally, an assertion $\PP$ is \emph{precise} iff 
    for all $\Stack$, $\PHeap_1,\PHeap_2,\PHeap_1',\PHeap_2'$,
    if \isDefined{\PHeap_1 \Add \PHeap_2} and
    $\PHeap_1 \Add \PHeap_2 = \PHeap_1' \Add \PHeap_2'$
    and $\Stack,\PHeap_1 \models \PP$ and
    and $\Stack,\PHeap_2 \models \PP$, then $\PHeap_1 = \PHeap_2$.
  }
  \label{fig:proof-rules}
\end{figure}

\section{Methodology and Formalization}\label{sec:methodology}

We now present the details of our methodology and formalize it on
top of the concurrent separation logic (CSL) introduced in
\secref{sec:preliminaries}.
We first discuss how abstract models are encoded into program states.
After that, we extend the programming language, its operational semantics, and the CSL proof system with refinement-specific commands and rules.
Finally, we show that our methodology is sound, \ie, a proof in our program logic
guarantees that the (finite) traces of the given implementation are included in the traces of the abstract model. Further details and proofs are provided in the appendix.

\subsection{Abstract Models}
Our methodology takes as input an abstract model that should be refined by a concrete implementation.
More precisely, we assume that an abstract model is provided as a
(potentially infinite-state) abstract transition system.

Toward a formal definition, recall from \secref{sec:sound:assertions} the definition 
of heap-independent assertions in first-order logic (\FOL).
Moreover, given a sequence of variables $\vec{x} = (x_1, \ldots, x_k)$ , we denote by
$\Primed{\vec{x}}$ the same sequence in which each $x_i$ is replaced 
by a primed version, \ie, $\Primed{\vec{x}} = (\Primed{x_1}, \ldots, \Primed{x_k})$.
\begin{definition}[Abstract Transition System (ATS)]\label{def:ts}
An abstract transition system is a quadruple
$\AbsTS \defeq (\AbsLength,\AbsVars, \AbsInit, \AbsNext)$, where
\begin{itemize}
  \item $\AbsVars = (\AbsVar{1}, \ldots, \AbsVar{\AbsLength})$ is a
         repetition-free sequence of $\AbsLength \geq 1$ variables,
  \item the \emph{initial state formula} $\AbsInit(\AbsVars)$ is an $\FOL$ assertion over $\AbsVars$, and
  \item the \emph{next state formula} $\AbsNext(\AbsVars, \Primed{\AbsVars})$ is an $\FOL$ assertion over $\AbsVars$ and $\Primed{\AbsVars}$.
\EndDef
\end{itemize}
\end{definition}
We consider only ATSs that are \emph{stutter-invariant},
that is, for all sequences of values $\vec{\Val} \in \Vals^{\AbsLength}$, the formula
$\AbsNext(\vec{\Val},\vec{\Val})$ is valid.
Stutter invariance is desirable in the context of refinement proofs as concrete implementations
should always be allowed to perform more fine-grained computation steps.
Furthermore, every ATS can be turned into a stutter-invariant one by considering the modified next-state
formula 
$
   \AbsNext(\AbsVars,\Primed{\AbsVars}) ~\vee~ \bigwedge_{1 \leq i \leq \AbsLength} \Primed{\AbsVar{i}} = \AbsVar{i}$.

Throughout the rest of this paper, we fix a stutter-invariant
abstract transition system 
$\AbsTS \defeq (\AbsLength,\AbsVars, \AbsInit, \AbsNext)$ representing
the abstract model we would like to prove refinement of.

To reason about refinement, we need to formalize the observable traces
 of $\AbsTS$.
Every evaluation of an \AbsTS's variables $\AbsVars$ 
constitutes one of its \emph{states}.
Formally, the \emph{state space} of $\AbsTS$ is $\AbsStates \defeq \Vals^{\AbsLength}$.
A (finite) \emph{path} of $\AbsTS$ is a sequence of states $\AbsState_1 \ldots \AbsState_n$, where $n \geq 1$, such that
\begin{enumerate}
\item $\valid \AbsInit(\AbsState_1)$, \ie, $\AbsState_1$ is an initial state of \AbsTS, and 
\item for all $i \in [1,n)$, we have $\valid \AbsNext(\AbsState_i,\AbsState_{i+1})$, \ie,
      for every state but the last one, $\AbsTS$ admits a transition to the next state on the path.
\end{enumerate}
We collect in the set $\AbsPaths$ all finite paths of the transition system $\AbsTS$.

A \emph{trace} then projects every state on a path to the values corresponding to those variables
that are observable, \eg, variables modeling I/O channels as outlined in \secref{sec:overview}.
For simplicity, we assume that exactly one variable, $\AbsVar{1}$, 
is observable; generalizing to arbitrary sets of observable variables is
straightforward.
We denote by $\Obs{\AbsState}$ the projection of state $\AbsState$ to the value assigned
to the observable variable $\AbsVar{1}$, that is, if $\AbsState = (\Val_1,\ldots,\Val_\AbsLength)$, then
$\Obs{\AbsState} = \Val_1$. 
\begin{definition}
The set $\AbsTraces$ of \emph{observable traces} of $\AbsTS$ is given by
\begin{align*}
  \AbsTraces \ddefeq \{
     \Obs{\AbsState_1} \ldots \Obs{\AbsState_n}
     ~|~
     & n \in \Nats \tand \\
     & \AbsState_1 \ldots \AbsState_n \in \AbsPaths
  \}. \tag*{\EndDef}
\end{align*}
\end{definition}
\paragraph{State Encoding.}
\label{sec:methodology:state}
To reason about the behavior of $\AbsTS$s with the machinery
offered by concurrent separation logic (\cf~\secref{sec:preliminaries}), 
we encode states of $\AbsTS$s within program states---more precisely: as part of the heap.

We fix a set $\ALocs \subseteq \Locs$ of dedicated \emph{ghost addresses} that cannot be allocated or disposed of---they thus need to be already allocated before program execution.
Every variable $\AbsVar{i}$ of \AbsTS is then encoded as a ghost address
$\AbsLoc{i} \in \ALocs$ at which we store the current evaluation of
$\AbsVar{i}$.
We use $\stdout$ as a synonym for $\AbsLoc{1}$---the address of the only observable variable $\AbsVar{1}$.

Given a heap $\PHeap$, the contents of the addresses $\AbsLocs$ in $\PHeap$ 
encode the current state of $\AbsTS$.
We introduce a function $\AbsLocsOfName\colon \PHeaps \to \AbsStates$
which extracts the transition system's state from the heap:
\begin{definition}\label{def:methodology:get-state}
The \emph{state extraction function} $\AbsLocsOfName\colon \PHeaps \to \AbsStates$ is
\begin{align*}
  \AbsLocsOf{\PHeap} \ddefeq
  \begin{cases}
    (\Val_1,\ldots,\Val_\AbsLength) & \tif \AbsLocs \in \dom{\PHeap} \tand \\
                                    & \quad \text{for all } i \in [1,\AbsLength], ~ \PHeap(\AbsLoc{i}) = (\Perm_i,\Val_i) \\
& \qquad\qquad\qquad\qquad \tand \Perm_i > \PermNone \\
    \undefined & \text{otherwise}~. $\EndDef$
  \end{cases}
\end{align*}
\end{definition}
While ghost addresses cannot be allocated or disposed, (ghost) commands can read and modify
their contents just like with ordinary addresses.
Hence, we can enrich a concrete implementation with ghost commands to model updates to
the state of the abstract model \AbsTS.
\paragraph{Ghost Locks.}
\label{sec:methodology:locks}
We employ a dedicated lock $\GhostLock$ to protect the ghost locations
$\AbsLocs$ representing the current state of the abstract model $\AbsTS$
from undesirable modifications. 
To this end, we require that any lock invariant $\GhostInv$ chosen for $\GhostLock$
contains \emph{some} permission to each of the locations $\AbsLocs$---thus preventing modification of their content without acquiring $\GhostLock$ first. Formally:
\begin{assumption}\label{ass:ghostlock}
For any lock invariant $\GhostInv$ associated with the ghost lock $\GhostLock$,
there exists a permission $\Perm > 0$ such that the following entailment is valid:
\begin{align*}
  \valid~~ \big( 
  \GhostInv 
  \quad\Rightarrow\quad
   \acc{\AbsLoc{1}}{\Perm} \sep \ldots \sep 
   \acc{\AbsLoc{\AbsLength}}{\Perm}
  \big)
  \tag*{\EndDef}
\end{align*}
\end{assumption}
Our methodology prescribes that \GhostLock is a \emph{ghost lock} such that it can be safely
erased at runtime and thus cannot block execution.
To guarantee the above property, our formalization treats $\GhostLock$
differently from other locks: it is a dedicated lock that is invisible to programmers and
thus can neither be declared nor locked by them.
Instead, it will be governed by the refinement-specific ghost commands $\Initialize$ and $\Next$ in a way such that
$\GhostLock$ is indeed a ghost lock.
\subsection{Programming Language and Operational Semantics}
\label{sec:methodology:lang}
We now extend the programming language from 
\secref{sec:sound:language} with 
an output operation $\Print{\EE}$
and
the refinement-specific commands $\InitializeBlock{\CC}$ and $\NextBlock{\CC}$.
Formally, we expand the grammar in 
\secref{sec:sound:language} as follows:
\[
  \Command \cceq
  \Print{\EE} 
  ~|~ \InitializeBlock{\CC} ~|~ \NextBlock{\CC}
  ~|~ \ldots 
\]
We use ghost addresses taken from $\ALocs$ to formalize
the semantics of each of the above commands; \figref{fig:operational:novel} summarizes how we formally extend the execution relation $\OpRel$ from \secref{sec:sound:semantics}.
We briefly go over the new rules:
\begin{figure}[t]
\centering
\begin{tabular}{l@{\hskip 0.05in}l@{\hskip 0.05in}l}
\Rule{Print}
&
$\Conf{\Print{\EE}}{\Stack}{\Heap}
 \OpRel
 \Conf{\Skip}{\Stack}{\Heap\Update{\OutVar}{\Append{\Sem{\EE}(\Stack)}{\Heap(\OutVar)}}}$
& if $\OutVar \in \dom{\Heap}$ 
\\
\Rule{PrintA}
&
$\Conf{\Print{\EE}}{\Stack}{\Heap}
 \OpRel \Abort$
& if $\OutVar \notin \dom{\Heap}$ 
\\
\Rule{Init}
&
$\Conf{\InitializeBlock{\Command}}{\Stack}{\Heap}
 \OpRel
 \Conf{\DeclareLockBlock{\GhostLock}{\Command}}{\Stack}{\Heap}$
& 
\\[0.5em]
\multicolumn{3}{c}{
\inference[\Rule{Next}]{
  \isAtomic{\Command} &
 \Conf{\Command}{\Stack}{\Heap}
  \OpRelTrans
  \Conf{\Skip}{\Stack'}{\Heap'}
}{
\Conf{\NextBlock{\Command}}{\Stack}{\Heap}
  \OpRel
  \Conf{\Skip}{\Stack'}{\Heap'}
}
}
\end{tabular}

\caption{Extension of the rules in \figref{fig:operational} that determine the operational semantics.
Here, $\OpRelTrans$ denotes the reflexive and transitive closure of execution relation $\OpRel$.
}
\label{fig:operational:novel}
\end{figure}

$\Print{\EE}$ appends the value of $\EE$ to the standard output stream, which we represent by a mathematical sequence stored at the ghost address $\stdout \in \ALocs$.
Print thus modifies the only observable variable of the abstract mode $\AbsTS$.
If $\stdout$ has not been allocated, we abort execution.

The command $\InitializeBlock{\CC}$ is, operationally speaking, a dedicated command for declaring the ghost lock $\GhostLock$.
In fact, the rule \Rule{Init} desugars it to an explicit lock declaration $\DeclareLockBlock{\GhostLock}{\CC}$.\footnote{While we do not allow programmers to explicitly use the ghost lock $\GhostLock$, the rules of our operational semantics can use it like any other lock.}
Conceptually, $\Initialize$ takes the important role of marking the system as initialized, \ie, after entering $\Initialize$, the concrete model must be in an observable state that matches the (observable part of) the abstract model's initial state.
The command \NextBlock{\Command} is, operationally speaking again, a dedicated command for the conditional critical region
that acquires the ghost lock $\GhostLock$, executes $\CC$, and releases $\GhostLock$, that is, it can be viewed as syntactic sugar for $\WithBlock{\GhostLock}{\True}{\Command}$.
However, to ensure that $\GhostLock$ is indeed a \emph{ghost} lock and can thus be safely erased at runtime, we additionally require that every command $\CC$ put into a $\NextBlock{\CC}$ block is \emph{atomic}.
That is, $\CC$ can be executed in a single step without interference from other threads.
Consequently, $\NextBlock{\CC}$ is atomic and thus executed in a single step, as reflected by the \Rule{Next} rule: if the execution of $\Command$ terminates in a configuration $\Configuration$, then
$\NextBlock{\CC}$ transitions to configuration $\Configuration$ in a single step.

We assume that I/O operations, such as $\Print{\EE}$, are atomic and that adding ghost code to an atomic command yields an atomic command.
Apart from these assumptions, our methodology does not prescribe how to ensure atomicity---be it through known atomic statements, a global locking strategy, or formal proofs of atomicity in a program logic (\cf~\secref{sec:overview}).

\paragraph{Observable Traces.}
To formalize program refinement, 
we need to define the observable traces 
induced by a program configuration.
Similar to the definition of traces of the abstract model \AbsTS,
we project every program configuration in an execution
$\Configuration_1 \ldots \Configuration_n$ to its observable state once the system has been initialized, which is recorded by the operational semantics in its command structure:
\begin{definition}
  A command $\Command$ is \emph{initialized}, written $\isInit{\Command}$, if and only if the ghost lock $\GhostLock$ is declared in $\Command$, \ie, $\GhostLock \in \DLocks{\Command}$.
  \EndDef
\end{definition}
In our formalization, only the content of the standard output stream, which is modified via $\Print{\EE}$ calls, is observable.
The \emph{observable state} $\Obs{\Configuration}$ of a configuration
$\Configuration$ is defined as the empty or singleton sequence
\begin{align*}
  \Obs{\Configuration} \ddefeq 
  \begin{cases}
     \left[ \Heap(\stdout) \right] & \tif \Configuration = (\Conf{\CC}{\Stack}{\Heap})
                           \tand \isInit{\CC} \\
     \left[ ~ \right] & \text{otherwise}.
  \end{cases}
\end{align*}
A \emph{trace} is then obtained from a (finite) execution 
by mapping configurations to their observable state and concatenating the resulting sequences:
\begin{definition}\label{def:cmd:traces}
The set \Traces{\InitialConfs} of finite \emph{traces} induced by a set 
$\InitialConfs \subseteq \Confs$ is
\begin{align*}
  \Traces{\InitialConfs} \ddefeq \{\,
    \Obs{\Configuration_1} \ldots \Obs{\Configuration_n}
    \mid\, &
    n \in \Nats 
    \tand \Configuration_1 \in \InitialConfs, \Configuration_2,\ldots,\Configuration_n \in \Confs \\
    & \tand \Configuration_1 \OpRel \ldots \OpRel \Configuration_n
  \,\}~. \tag*{\EndDef}
\end{align*}
\end{definition}
The definition of $\Obs{}$ ignores the observable state as long as the system is not initialized, \ie, we have not reached $\InitializeBlock{\ldots}$ yet.
This is justified as long as observable state is never modified before initialization
and, thus,
the implementation performs no observable action that needs to be matched by the abstract model.
Our proof system will guarantee this property.

\subsection{Proof System}
\label{sec:methodology:logic}
Recall from \secref{sec:sound:logic} the proof
system of concurrent separation logic (CSL).
We now extend CSL to a sound program logic for proving
refinement in the extended programming language introduced earlier in this section.
Our assertion language is the same as for CSL, \ie, it consists of all $\SL$ assertions (see \secref{sec:sound:assertions}).

\paragraph{Judgments.}
Judgments that can be derived in our logic
(notice the $\vdash_{\refineIndicator}$ to distinguish our judgments from CSL judgments) 
 are of the form
\begin{align*}
  \srtriple{\PP}{\CC}{\PQ}~,
\end{align*}
where
\LockEnv is a lock environment such that $\bigstar_{\Lock \in \Locks} \LockEnv(\Lock)$
describes the shared state,
$\PP \in \SL$ is the precondition evaluated in the local state,
\CC is a command
and
$\PQ \in \SL$ is the postcondition evaluated in the local state.

\figref{fig:proof-rules-novel} shows the proof rules for the new commands $\Initialize$, $\Next$, and $\Print{\EE}$.
Furthermore, our proof system inherits all proof rules from CSL (\cf~\figref{fig:proof-rules} where we tacitly replace $\vdash$ by $\vdash_{\refineIndicator}$ and exclude dedicated
ghost locks in the rules $\Rule{Lock}$ and $\Rule{With}$).
We briefly go over the rules for the new commands:
\begin{figure}[t]
  \centering
  \begin{tabular}{l}
\inference[\Rule{Init}]{
  \rtriple{}{\LockEnv,\LockDef{\GhostLock}{\GhostInv},\LockDef{\InitLock}{\emp}}{
    \PP
  }{
    \Command
  }{
    \PQ
  }
}{
  \rtriple{}{\LockEnv}{
  \left(\exists \vec{y} \mydot \Aabsvars{\vec{y}} 
        \wedge \AbsInit(\vec{y}) \wedge \GhostInv\right) \sep \PP
  }{
      \InitializeBlock{\Command}
  }{
      \GhostInv \sep \PQ 
  }
}
\\[1.5em]
\inference[\Rule{Next}]{
  \isAtomic{\CC} \\
  \srtriple{
      (\Aabsvars{\vec{o}} \wedge \GhostInv) \sep \PP
  }{
      \Command
  }{
      (\exists \vec{y} \mydot \Aabsvars{\vec{y}} \wedge \AbsNext(\vec{o},\vec{y}) \wedge
      \GhostInv) \sep \PQ)
  }
}{
  \rtriple{}{\LockEnv,\LockDef{\GhostLock}{\GhostInv}}{
      \PP
  }{
      \NextBlock{\Command}
  }{
      \PQ
  }
}
\\[1.25em]
\inference[\Rule{Print}]{
}{
  \rtriple{}{\LockEnv,\LockDef{\InitLock}{\emp}}{
        \pt{\OutVar}{\PermWrite}{\EE'}
    }{
        \Print{\EE}
    }{
        \pt{\OutVar}{\PermWrite}{(\Append{\EE}{\EE'})}
    }
}
\end{tabular}

  \caption{Proof rules for the new commands.
    Here, $\vec{o} = (o_1,\ldots,o_\AbsLength)$ are fresh variables.
    Moreover, for some $\Perm > 0$,
    $\Aabsvars{\vec{y}} \defeq
    \apt{\AbsLoc{1}}{\Perm}{y_1} \sep \ldots \sep 
    \apt{\AbsLoc{\AbsLength}}{\Perm}{y_\AbsLength}$.
  }
  \label{fig:proof-rules-novel}
\end{figure}
\paragraph{Initialization Blocks.}
The rule \Rule{Init} can be viewed as a stronger version of the rule \Rule{Lock},
which declares \emph{two} locks at once---the ghost lock $\GhostLock$ with 
lock invariant $\GhostInv$ and a second dedicated ghost lock $\InitLock$ with an empty lock
invariant.

The ghost lock $\GhostLock$ has already been used in our operational semantics; it governs
access to the state of the abstract model \AbsTS.
The lock $\InitLock$ is never acquired but serves as a marker in our proof system
to record that an initialization block has been reached; we will explain why we need it in
more detail further below.

\Rule{Init} is a stronger version of \Rule{Lock}; in addition to declaring a lock, it verifies that the abstract model's state, \ie, the contents
$y_1,\ldots,y_\AbsLength$ of the ghost addresses $\AbsLocs$, is a valid initial state in \AbsTS, \ie,
$\AbsInit(y_1,\ldots,y_\AbsLength)$ holds.
By \assref{ass:ghostlock}, we know that the ghost lock's invariant $\GhostInv$
holds the necessary permissions to access the contents of these ghost addresses.

Regarding the second ghost lock $\InitLock$,
notice that the output stream, that is, the observable content of $\stdout$, needs to be 
part of the program state before initialization, since the ghost address $\stdout$ cannot be allocated by the implementation itself.
In \defref{def:cmd:traces}, we formalized the traces of an implementation in a way such that the content of $\stdout$ is ignored before initialization.
The underlying rationale is that the implementation performs no observable action before it is initialized and we know the state of the abstract model \AbsTS.
To guarantee that there are indeed no modifications of observable state---in our case: there are no $\Print{\EE}$ calls---before initialization, we use the ghost lock $\InitLock$.
It acts as a token that is created upon initialization,
stays in the lock environment, since it is never acquired
(in contrast to the ghost lock $\GhostLock$), and needs to be part of the lock environment \LockEnv for all rules that modify observable state, such as \Rule{Print}.
\paragraph{Next Blocks.}
The rule \Rule{Next} can be viewed as a specialized version of the rule \Rule{With}, applied to ghost lock $\GhostLock$ and wait condition \texttt{true}. However, it contains an additional premise to ensure that
executing $\NextBlock{\CC}$ can be simulated by taking a \emph{single} transition of \AbsTS. Since $\CC$ is atomic, the postcondition $\AbsNext(\vec{o},\vec{y})$ in the premise achieves this effect (where $\vec{o}$ and $\vec{y}$
capture the state of $\AbsTS$ before and after execution of $\NextBlock{\CC}$).
For initialized commands, \assref{ass:ghostlock} guarantees that changes to the transition
system's state are possible only when holding the lock $\GhostLock$, \ie, inside
of $\NextBlock{\CC}$ blocks; outside of such blocks, $\AbsTS$ will
perform a stutter step for every step of the concrete implementation.

\paragraph{Print Statements.}
Finally, to understand the \Rule{Print} rule,
we notice that $\Print{\EE}$ is---if we ignore atomicity---syntactic sugar for the command
\[
  \Command \ddefeq 
  \Lookup{\Var}{\stdout}\Seq\Mutation{\stdout}{\Append{\EE}{\Var}},
\]  where $\Var$
is a fresh variable that is not used anywhere else.
Assuming that $\Print{\EE}$ and $\Command$ are identical, 
the rule \Rule{Print} can then be derived using the standard rules in \figref{fig:proof-rules} (in sequential separation logic, \ie, for $\LockEnv = \emptyset$).
\subsection{Soundness}
\label{sec:sound:proof}
Intuitively, soundness means that deriving $\srtriple{\PP}{\CC}{\PQ}$ implies that, for every execution of $\CC$
that starts in a configuration with a local state given by $\PP$ and a shared state given by $\LockEnv$, 
the trace of the execution is also a trace of the abstract model $\AbsTS$.
We now formalize and prove the above soundness claim for our refinement logic.
We build upon Vafeiadis'~\cite{Vafeiadis11} existing soundness proof for CSL (\cf~\secref{sec:preliminaries}).
Hence, we first restate the main ingredients for proving CSL sound, where we slightly generalize to simplify their re-use.

Let $\Succ(\Configuration,\Configuration')$ be a predicate over \emph{two} program configurations; we will use $\Succ(\Configuration,\Configuration')$ only if there is a transition 
$\Configuration \OpRel \Configuration'$.
Soundness of CSL as well as our refinement logic is based
on the following notion of configuration safety:
\begin{definition}[Generalized Configuration Safety]
\label{def:conf-safe}
The $n$-step safety predicate
$\Safe{\Succ}{n}{\CC}{\Stack}{\PHeap}{\LockEnv}{\PQ}$ 
is recursively defined as follows:
  \begin{itemize}
    \item $\Safe{\Succ}{0}{\CC}{\Stack}{\PHeap}{\LockEnv}{\PQ}$ holds always.
    \item $\Safe{\Succ}{n+1}{\CC}{\Stack}{\PHeap}{\LockEnv}{\PQ}$ holds if and only if
        \begin{enumerate} 
            \item $\Reads{\CC}{\Stack} \subseteq \dom{\PHeap}$;
            \item for all $\HeapF$, if $\isDefined{\PHeap \Add \HeapF}$, then $(\Conf{\CC}{\Stack}{\PHeap \Add \Heap}) \NOpRel \Abort$;
            \item if $\CC = \Skip$, then $\Stack,\Heap \models \PQ$;
            \item for all $\CC'$, $\Stack'$, $\HeapF$, $\HeapS$, $\PHeap'$, if
            $\Stack,\HeapS \models \bigstar_{\Lock \in \Locked{\CC'}\setminus\Locked{\CC}} \LockEnv(\Lock)$ \\
            and
            $(\Conf{\CC}{\Stack}{\PHeap \Add \HeapS \Add \HeapF}) \OpRel (\Conf{\CC'}{\Stack'}{\PHeap'})$, then there exist $\PHeap''$ and $\HeapS'$ such that
           \begin{enumerate}
             \item $\PHeap' = \PHeap'' \Add \HeapS' \Add \HeapF$;
             \item $\Stack',\HeapS' \models \bigstar_{\Lock \in \Locked{\CC}\setminus\Locked{\CC'}} \LockEnv(\Lock)$;
             \item $\Succ(\CC,\Stack,\PHeap\Add\HeapS\Add\HeapF,\CC',\Stack',\PHeap')$ holds; and
             \item $\Safe{\Succ}{n}{\CC'}{\Stack'}{\PHeap''}{\LockEnv}{\PQ}$ holds.
  \EndDef
           \end{enumerate}
        \end{enumerate} 
  \end{itemize}
\end{definition}
Intuitively, $\Safe{\Succ}{n}{\CC}{\Stack}{\PHeap}{\LockEnv}{\PQ}$ means that every configuration $\Configuration$ with command $\CC$, 
local state $\PHeap$ and
shared state $\HeapS$ given by $\LockEnv$, and, optionally,
 additional global state $\HeapF$,
is \emph{safe} w.r.t. lock environment $\LockEnv$ and postcondition $\PQ$ for $n \geq 0$ steps; that is,
every configuration $\Configuration'$ reached from $\Configuration$ via one transition
(1) has no data race,
(2) does not abort execution,
(3) satisfies postcondition $\PQ$ if execution terminated,
(4 a,b) respects the lock invariants in $\LockEnv$, 
(4 c) if we reached $\Configuration'$ via a transition $\Configuration'' \OpRel \Configuration'$,
then $\Succ(\Configuration'', \Configuration')$ holds, and
(4 d) is itself safe for $n-1$ steps.

\begin{definition}[Valid CSL Triples]\label{def:valid:csl}
We write $\semtriple{}{\LockEnv}{\PP}{\CC}{\PQ}$ (read: the triple $\{\PP\}\,\CC\,\{\PQ\}$ is \emph{valid} given lock environment $\LockEnv$)
if and only if
\[
    \forall n, \Stack, \PHeap\colon \quad 
    \Stack,\PHeap \models \PP \qimplies 
    \Safe{\True}{n}{\CC}{\Stack}{\Heap}{\LockEnv}{\PQ}~.
\]
\end{definition}
Vafeiadis~\cite{Vafeiadis11} used the above notions of configuration safety
and validity to prove that every CSL judgment that can be derived using the 
rules in \figref{fig:proof-rules} yields a valid triple; the original soundness proof 
for CSL is due to Brookes~\cite{Brookes07}.
Formally:
\begin{theorem}[Soundness of CSL]\label{thm:csl-sound} 
  If $\striple{\PP}{\CC}{\PQ}$, then $\semtriple{}{\LockEnv}{\PP}{\CC}{\PQ}$.
\end{theorem}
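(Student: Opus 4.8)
The plan is to follow the classical argument of Brookes~\cite{Brookes07} and Vafeiadis~\cite{Vafeiadis11} essentially verbatim, observing that our generalized safety predicate (\defref{def:conf-safe}) specializes exactly to Vafeiadis' notion once we take $\Succ = \True$: clause (4c) becomes vacuously true, while clauses (1)--(3), (4a), (4b) and (4d) are precisely the clauses of Vafeiadis' safety monotone. Concretely, I would prove the theorem by induction on the derivation of $\striple{\PP}{\CC}{\PQ}$, showing for each proof rule of \figref{fig:proof-rules} that validity (\defref{def:valid:csl}) of the premises entails validity of the conclusion. After unfolding \defref{def:valid:csl}, this reduces to: for every $n$ and every $(\Stack,\PHeap)$ with $\Stack,\PHeap \models \PP$, establish $\Safe{\True}{n}{\CC}{\Stack}{\PHeap}{\LockEnv}{\PQ}$, which is itself proved by a nested induction on $n$.

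\textbf{Auxiliary lemmas.}
Before the main induction I would isolate the standard structural facts about the safety predicate. (i) \emph{Monotonicity in $n$}: $\Safe{\True}{n+1}{\CC}{\Stack}{\PHeap}{\LockEnv}{\PQ}$ implies $\Safe{\True}{n}{\CC}{\Stack}{\PHeap}{\LockEnv}{\PQ}$. (ii) \emph{Framing}: besides the $\HeapF$-quantification already built into \defref{def:conf-safe}, I need that safety is preserved under enlarging the local heap by a piece not modified by $\CC$, which underlies \Rule{Frame}. (iii) A \emph{sequential composition lemma}: if $\Safe{\True}{n}{\CC_1}{\Stack}{\PHeap}{\LockEnv}{\PR}$ and, for all $(\Stack',\PHeap')\models\PR$, $\Safe{\True}{n}{\CC_2}{\Stack'}{\PHeap'}{\LockEnv}{\PQ}$, then $\Safe{\True}{n}{\CC_1\Seq\CC_2}{\Stack}{\PHeap}{\LockEnv}{\PQ}$; the proof tracks the operational rules \Rule{Seq}, \Rule{SeqA}, \Rule{SeqS} and uses clause (3) of safety for the hand-off when $\CC_1$ reduces to $\Skip$. (iv) Transfer lemmas for $\DeclareLockBlock{\Lock}{\CC}$ and $\WithBlock{\Lock}{\EE}{\CC}$ describing the movement of the lock invariant $\LockEnv(\Lock)$ between the local and the shared state, matching the $\Locked{\cdot}$-bookkeeping in clauses (4a,b). (v) The key \emph{parallel decomposition lemma}: from $\Safe{\True}{n}{\CC_i}{\Stack}{\PHeap_i}{\LockEnv}{\PQ_i}$ for $i=1,2$ with $\isDefined{\PHeap_1 \Add \PHeap_2}$ (and the stack side conditions), derive $\Safe{\True}{n}{\CC_1\Par\CC_2}{\Stack}{\PHeap_1\Add\PHeap_2}{\LockEnv}{\PQ_1\sep\PQ_2}$.

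\textbf{Main induction, rule by rule.}
The axioms \Rule{Skip}, \Rule{Assign}, \Rule{Write}, \Rule{Read}, \Rule{Alloc}, \Rule{Free} are discharged directly: the precondition guarantees the relevant address (if any) lies in $\dom{\PHeap}$, so the corresponding abort rule is not enabled, no \Rule{Race} arises for a single command, the unique successor configuration is the one given by the operational rule, and the postcondition follows from the heap/stack update definitions (after which the successor is $\Skip$ and is trivially safe). \Rule{Cons} uses soundness of entailment together with (i); \Rule{Ex}, \Rule{Disj} are routine case analyses; \Rule{Conj} is the place where \emph{preciseness} of lock invariants is needed, to force the two sub-derivations to split the shared heap identically so their local heaps can be consistently intersected. \Rule{Seq}, \Rule{Cond}, \Rule{While} follow from lemma (iii) and the unfolding rule \Rule{While}. \Rule{Frame} uses (ii), \Rule{Lock} and \Rule{With} use (iv), and \Rule{Par} uses (v), where the side conditions $\FV{\PP_i,\CC_i,\PQ_i}\cap\Mod{\CC_{3-i}} = \emptyset$ rule out interference on the stack.

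\textbf{Main obstacle.}
The hard case is \Rule{Par}, i.e.\ lemma (v). One must maintain, along every interleaved execution, the invariant: the current local heap decomposes as $\PHeap_1 \Add \PHeap_2$ with each thread $\CC_i$ safe for its share, and the shared heap carries $\bigstar_{\Lock}\LockEnv(\Lock)$ minus the invariants of currently-held locks, the latter handed to whichever thread holds them. Every step of $\CC_1\Par\CC_2$ is attributed by \Rule{Par1}/\Rule{Par2} to one thread, and the other thread's heap together with the external frame $\HeapF$ jointly play the role of ``$\HeapF$'' in that thread's safety obligation, so disjointness and non-abort transfer. The delicate points are: (a) the \Rule{Race} rule --- a data race cannot occur because $\Reads{\CC_i}{\Stack}$ and $\Writes{\CC_i}{\Stack}$ lie within the unprotected part of $\PHeap_i$ and $\PHeap_1,\PHeap_2$ are disjoint; (b) correctly tracking $\Locked{\CC'}$ when a thread enters or leaves a CCR, which is exactly what clauses (4a,b) of \defref{def:conf-safe} are designed to handle; and (c) the \Rule{ParS} step, recombining the two postconditions into $\PQ_1\sep\PQ_2$. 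Since our clause (4c) is vacuous for $\Succ=\True$, no work beyond the existing Brookes/Vafeiadis argument is required, so I would spell out only this reduction and cite~\cite{Brookes07,Vafeiadis11} for the remaining details.
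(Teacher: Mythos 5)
Your proposal is correct and matches the paper's treatment: the paper states this theorem without an inline proof, deferring to Vafeiadis and Brookes exactly on the grounds you identify, namely that for $\Succ = \True$ clause (4c) of the generalized safety predicate (\defref{def:conf-safe}) is vacuous and the remaining clauses coincide with Vafeiadis' original safety predicate, so the standard derivation-plus-$n$ induction with the usual auxiliary lemmas (framing, sequencing, lock transfer, parallel decomposition, precision for \Rule{Conj}) goes through unchanged. Your reconstruction of that cited argument, including the identification of \Rule{Par} as the delicate case, is faithful and adds no divergence from the paper's approach.
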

Compared to CSL, our refinement logic introduces one rule 
for each of the new commands $\InitializeBlock{\CC}$, $\NextBlock{\CC}$, and $\Print{\EE}$ (see \figref{fig:proof-rules-novel}).
If we only consider configuration safety as in CSL, 
then, as discussed in \secref{sec:methodology:logic},
each of these rules is either a special case of the existing rules (\Rule{Print}) or a variant of an existing rule
with an even stronger precondition (\Rule{Init}, \Rule{Next}).
Hence, the soundness of CSL carries over to our logic 
 (recall that $\vdash_{\refineIndicator}$ indicates refinement judgments):
\begin{lemma}\label{lem:safety}
  If $\srtriple{\PP}{\CC}{\PQ}$, then $\semtriple{}{\LockEnv}{\PP}{\CC}{\PQ}$.
\end{lemma}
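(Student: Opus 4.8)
The plan is to prove the lemma by induction on the derivation of $\srtriple{\PP}{\CC}{\PQ}$, showing directly that for all $n$, $\Stack$, and $\PHeap$ with $\Stack,\PHeap \models \PP$ we have $\Safe{\True}{n}{\CC}{\Stack}{\PHeap}{\LockEnv}{\PQ}$---which, by \defref{def:valid:csl}, is exactly $\semtriple{}{\LockEnv}{\PP}{\CC}{\PQ}$. For every rule inherited from CSL (\figref{fig:proof-rules}), the inductive step is carried out verbatim as in the soundness proof underlying \thmref{thm:csl-sound}; the only thing that has to be re-inspected is that the metatheoretic facts that proof relies on (the frame property and safety monotonicity of the operational semantics) are not invalidated by the three new commands. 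This is immediate, since the rules in \figref{fig:operational:novel} only touch ghost addresses---which are owned through the relevant preconditions and lock invariants---and otherwise merely repackage sub-commands that the argument already handles. It therefore remains to discharge the three cases \Rule{Print}, \Rule{Init}, and \Rule{Next}.

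For \Rule{Print} I would argue straight from \defref{def:conf-safe}. The precondition $\pt{\OutVar}{\PermWrite}{\EE'}$ grants full ownership of $\stdout$, so $\stdout \in \dom{\PHeap}$, hence $\stdout \in \dom{\PHeap \Add \HeapF}$ for every frame $\HeapF$; therefore \Rule{PrintA} never fires and there is no data race when $\Print{\EE}$ is analysed in isolation. The unique successor is $(\Conf{\Skip}{\Stack}{\PHeap'})$, where $\PHeap'$ agrees with $\PHeap$ except that $\stdout$ now holds $\Append{\EE}{\EE'}$ evaluated in $\Stack$; this satisfies the postcondition $\pt{\OutVar}{\PermWrite}{(\Append{\EE}{\EE'})}$, no lock is acquired or released, and $\Skip$ with its postcondition is safe for arbitrarily many further steps. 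The extra lock entry $\LockDef{\InitLock}{\emp}$ is inert: $\InitLock$ is never held and contributes only $\emp$ to the shared state.

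For \Rule{Init}, the operational rule \Rule{Init} rewrites $\InitializeBlock{\CC}$ into $\DeclareLockBlock{\GhostLock}{\CC}$ in a single heap-preserving step, so it suffices to establish safety of $\DeclareLockBlock{\GhostLock}{\CC}$ and prepend this step. The premise of the proof rule is a $\vdash_{\refineIndicator}$-triple under $\LockEnv, \LockDef{\GhostLock}{\GhostInv}, \LockDef{\InitLock}{\emp}$; since $\InitLock \mapsto \emp$ contributes nothing to $\bigstar_{\Lock \in \Locks}\LockEnv(\Lock)$ and $\InitLock$ never occurs in a command, the induction hypothesis makes that triple valid already under $\LockEnv, \LockDef{\GhostLock}{\GhostInv}$. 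Because $\GhostLock$ is treated just like an ordinary lock by the operational semantics, the \Rule{Lock} case of the induction then gives safety of $\DeclareLockBlock{\GhostLock}{\CC}$ from local state $\GhostInv \sep \PP$ with postcondition $\GhostInv \sep \PQ$. Finally, the precondition of the conclusion, $(\ex{\vec{y}} \Aabsvars{\vec{y}} \wedge \AbsInit(\vec{y}) \wedge \GhostInv) \sep \PP$, entails $\GhostInv \sep \PP$, so the strengthening is harmless and the case closes via \Rule{Cons}.

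For \Rule{Next}, the operational rule \Rule{Next} collapses $\NextBlock{\CC}$ into one step that reaches $(\Conf{\Skip}{\Stack'}{\Heap'})$ whenever $(\Conf{\CC}{\Stack}{\Heap}) \OpRelTrans (\Conf{\Skip}{\Stack'}{\Heap'})$, which is legitimate precisely because $\isAtomic{\CC}$ makes this run uninterruptible. The idea is to relate this to $\WithBlock{\GhostLock}{\True}{\CC}$: using \Rule{Cons}, \Rule{Ex} (on the fresh $\vec{o}$, which do not occur in $\CC$), and \assref{ass:ghostlock}---which guarantees that the local state $\GhostInv \sep \PP$ already pins down values $\vec{o}$ for the abstract variables---the premise of \Rule{Next} implies the \Rule{With}-style premise $\srtriple{\GhostInv \sep \PP}{\CC}{\GhostInv \sep \PQ}$ (the conjunct $\ex{\vec{y}} \Aabsvars{\vec{y}} \wedge \AbsNext(\vec{o},\vec{y})$, which only matters for the later trace-inclusion theorem, is simply dropped here). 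Soundness of \Rule{With} then yields safety of $\WithBlock{\GhostLock}{\True}{\CC}$ under $\LockEnv, \LockDef{\GhostLock}{\GhostInv}$ with precondition $\PP$ and postcondition $\PQ$, and it remains to transfer this to $\NextBlock{\CC}$. This transfer is the step I expect to be the main obstacle: one must make the slogan ``atomicity lets us treat the whole run of $\CC$ as one operational step'' precise inside the safety predicate. The single step of $\NextBlock{\CC}$ operationally bypasses the ghost lock, so the acquire/release bookkeeping of \defref{def:conf-safe} (the heaps $\HeapS$, $\HeapS'$ tied to the locks in $\Locked{\cdot}$) collapses, and one has to show that running $\CC$ on the combined local, shared, and frame heap still decomposes exactly as it does for $\WithBlock{\GhostLock}{\True}{\CC}$, using that $\isAtomic{\CC}$ rules out interference. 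Everything else is routine bookkeeping on top of \thmref{thm:csl-sound}; the full argument is deferred to the appendix.
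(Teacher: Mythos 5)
Your proposal is correct and follows essentially the same route as the paper: induction on the derivation, reusing the CSL cases verbatim, handling \Rule{Print} directly from \defref{def:conf-safe}, and reducing \Rule{Init} and \Rule{Next} to \Rule{Lock} and \Rule{With} after weakening their preconditions. The one step you flag as the main obstacle in the \Rule{Next} case---decomposing the heap after the collapsed single-step run of the atomic body $\CC$ while preserving the frame---is exactly what \lemref{lem:aux:atom} supplies, so your deferred argument closes without further ideas.
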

\appref{app:refine-found} provides further details.
To formalize that the implementation refines the abstract model \AbsTS, we extend the above definition of configuration safety such that, if the system is initialized, every concrete transition $\Configuration \OpRel \Configuration'$ taken is simulated by a transition of $\AbsTS$ (or stutters).
Moreover, if the system becomes initialized after a step, we require that the encoded state of $\AbsTS$ is a valid initial state.

Our formalization uses the state extraction function $\AbsLocsOfName$ from \defref{def:methodology:get-state} to refer to the abstract model's state encoded in the heap.
We then consider the configuration safety predicate
$\Safe{\RSucc}{n}{\CC}{\Stack}{\PHeap}{\LockEnv}{\PQ}$,
where the predicate $\RSucc((\CC,\Stack,\PHeap),(\CC',\Stack',\PHeap'))$ holds iff
  \begin{enumerate}[(i)]
    \item if $\isInit{\CC}$ holds, then 
    $\valid \AbsNext(\AbsLocsOf{\PHeap},\AbsLocsOf{\PHeap'})$;
    \item if $\isInit{\CC}$ does not hold and $\isInit{\CC'}$ holds, then $\valid \AbsInit(\AbsLocsOf{\PHeap'})$.
  \end{enumerate}
\begin{definition}[Valid Refinement Triples]\label{def:valid:refinement}
We write $\rsemtriple{\LockEnv}{\PP}{\CC}{\PQ}$ (read: the triple $\{\PP\}\,\CC\,\{\PQ\}$ is \emph{valid for refinement of \AbsTS} given $\LockEnv$)
if and only if
\[
    \forall n, \Stack, \PHeap\colon \quad 
    \Stack,\PHeap \models \PP \qimplies 
    \Safe{\RSucc}{n}{\CC}{\Stack}{\Heap}{\LockEnv}{\PQ}~.
\]
\end{definition}
As for CSL, our refinement logic is sound in the sense
that formally derived judgments yield valid triples: 
\begin{theorem}\label{thm:refine-sound}
  If $\srtriple{\PP}{\CC}{\PQ}$, then $\rsemtriple{\LockEnv}{\PP}{\CC}{\PQ}$.
\end{theorem}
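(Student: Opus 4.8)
The plan is to mimic the structure of Vafeiadis' soundness proof for CSL (Theorem~\ref{thm:csl-sound}), but to track the stronger successor predicate $\RSucc$ in place of the trivial predicate $\True$. Concretely, we proceed by induction on the derivation of $\srtriple{\PP}{\CC}{\PQ}$ and, for each proof rule, show that whenever $\Stack,\PHeap\models\PP$ then $\Safe{\RSucc}{n}{\CC}{\Stack}{\Heap}{\LockEnv}{\PQ}$ holds for all $n$; this is exactly unfolding \defref{def:valid:refinement}. Since by \lemref{lem:safety} we already have $\semtriple{}{\LockEnv}{\PP}{\CC}{\PQ}$, clauses~(1)--(3) and (4a)--(4b),(4d) of \defref{def:conf-safe} are already discharged for the weaker predicate, and the same arguments carry over verbatim because those clauses do not mention $\Succ$. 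Hence the only genuinely new obligation is clause~(4c): for every transition $(\Conf{\CC}{\Stack}{\PHeap\Add\HeapS\Add\HeapF})\OpRel(\Conf{\CC'}{\Stack'}{\PHeap'})$ admitted under the safety unfolding, the predicate $\RSucc((\CC,\Stack,\PHeap\Add\HeapS\Add\HeapF),(\CC',\Stack',\PHeap'))$ holds, i.e.\ conditions~(i) and~(ii) of its definition.

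First I would handle the ``uninitialized'' bookkeeping. For any rule whose command $\CC$ is not of the form $\InitializeBlock{\ldots}$ (or a context containing $\GhostLock$), we have $\neg\isInit{\CC}$, so condition~(i) is vacuous; and either $\neg\isInit{\CC'}$ as well (condition~(ii) vacuous) or the step is the one that introduces the ghost-lock declaration, which happens only via \Rule{Init}. For the step that does \emph{not} become initialized, both conditions are vacuous and (4c) is immediate; this covers all CSL-inherited rules and \Rule{Print} (since $\Print{\EE}$ neither reads nor writes any $\AbsLoc{i}$, and requires $\InitLock$ in its environment, which, as argued in \secref{sec:methodology:logic}, can only be present after an \Rule{Init}, so in fact $\isInit{\CC}$ already holds and we fall into the next case). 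For the \Rule{Init} step: the transition is $\Conf{\InitializeBlock{\Command}}{\Stack}{\Heap} \OpRel \Conf{\DeclareLockBlock{\GhostLock}{\Command}}{\Stack}{\Heap}$, so $\neg\isInit{\CC}$ but $\isInit{\CC'}$; we must show $\valid\AbsInit(\AbsLocsOf{\PHeap'})$. Here we use that the \Rule{Init} precondition is $\big(\exists\vec y\mydot\Aabsvars{\vec y}\wedge\AbsInit(\vec y)\wedge\GhostInv\big)\sep\PP$, so the combined heap satisfies $\Aabsvars{\vec y}$ with $\AbsInit(\vec y)$; by \assref{ass:ghostlock} and \defref{def:methodology:get-state}, $\AbsLocsOf{\PHeap\Add\HeapS\Add\HeapF} = \vec y$, and since the heap is unchanged by the step, $\AbsLocsOf{\PHeap'}=\vec y$ as well, giving condition~(ii).

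The main obstacle is the ``initialized'' case, condition~(i): once $\isInit{\CC}$, \emph{every} step must preserve $\valid\AbsNext(\AbsLocsOf{\PHeap},\AbsLocsOf{\PHeap'})$. This splits into two sub-arguments. (a) For steps that are \emph{not} a \Rule{Next} step, I would show the contents of $\AbsLocs$ are unchanged, so $\AbsLocsOf{\PHeap}=\AbsLocsOf{\PHeap'}$ and stutter-invariance of $\AbsTS$ gives $\valid\AbsNext(\vec v,\vec v)$. The reason the $\AbsLoc{i}$ are untouched is \assref{ass:ghostlock}: the ghost-lock invariant $\GhostInv$ holds strictly positive permission to each $\AbsLoc{i}$, and by the safety invariant $\GhostInv$ resides in the shared heap $\HeapS$ portion governed by $\GhostLock$ whenever $\GhostLock\notin\Locked{\CC}$; a well-typed step of $\CC$ can only modify addresses in its own local heap $\PHeap$ (mutation/free need full permission, which the local heap cannot hold for a $\AbsLoc{i}$ since $\GhostInv$ holds some), so no $\AbsLoc{i}$ changes. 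This is the delicate part and requires carefully invoking the frame/locality properties already packaged into \defref{def:conf-safe} together with \assref{ass:ghostlock}. (b) For a \Rule{Next} step $\Conf{\NextBlock{\Command}}{\Stack}{\Heap}\OpRel\Conf{\Skip}{\Stack'}{\Heap'}$, the \Rule{Next} proof rule gives us a derivation whose precondition binds $\vec o$ to $\AbsLocsOf{}$ of the pre-heap and whose postcondition asserts $\exists\vec y\mydot\Aabsvars{\vec y}\wedge\AbsNext(\vec o,\vec y)$; applying the CSL-style soundness already established (via \lemref{lem:safety} to the body $\Command$, which runs atomically in a single $\OpRelTrans$) we read off that the post-heap satisfies $\Aabsvars{\vec y}$ with $\AbsNext(\vec o,\vec y)$, hence $\AbsLocsOf{\PHeap'}=\vec y$ and $\valid\AbsNext(\AbsLocsOf{\PHeap},\AbsLocsOf{\PHeap'})$. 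The atomicity premise $\isAtomic{\CC}$ is what lets us treat the body as a single step so that only the net effect on $\AbsLocs$ matters. Once clause~(4c) is verified in all cases, $\Safe{\RSucc}{n}{\CC}{\Stack}{\Heap}{\LockEnv}{\PQ}$ follows by the same induction as for CSL, and \defref{def:valid:refinement} gives $\rsemtriple{\LockEnv}{\PP}{\CC}{\PQ}$, completing the proof.
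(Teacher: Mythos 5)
Your plan is essentially the paper's own proof: the same induction on the derivation, invoking \lemref{lem:safety} to dispose of the CSL-level clauses so that only the successor obligation remains, reading $\RSucc$(ii) off the \Rule{Init} precondition, obtaining $\RSucc$(i) for a \Rule{Next} step from that rule's postcondition together with atomicity (the paper's \lemref{lem:aux:atom}), and justifying every other initialized step as a stutter via \assref{ass:ghostlock} and stutter-invariance --- the paper merely packages the initialized portion as a separate lemma about $\DeclareLockBlock{\GhostLock}{\Command}$ (\lemref{lem:soundness-ghost-lock}) proved by its own nested induction. Two small slips to repair when writing this up: clause~(4d) \emph{does} mention the successor predicate (it is the recursive call $\Safe{\RSucc}{n}{\CC'}{\Stack'}{\PHeap''}{\LockEnv}{\PQ}$), so it must be re-established for $\RSucc$ via the induction hypothesis and the skip/ghost-lock lemmas rather than inherited verbatim from \lemref{lem:safety}; and $\Print{\EE}$ certainly does write a ghost address (namely $\stdout = \AbsLoc{1}$) --- the axiom case is vacuous simply because $\neg\isInit{\Print{\EE}}$, exactly as the paper treats it, while the initialized-but-outside-\Next{} case is excluded by the permission clash you correctly describe in your part~(a).
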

\begin{proof}[Sketch]
The proof is by structural induction over the rules of our refinement logic
(found in \figref{fig:proof-rules} and \figref{fig:proof-rules-novel}).
In each case, we show that the predicate
$\Safe{\RSucc}{n+1}{\CC}{\Stack}{\Heap}{\LockEnv}{\PQ}$ holds;
we first invoke \lemref{lem:safety} such that only two proof obligations remain
for every possible step 
\begin{align*}
  \Configuration = (\Conf{\CC}{\Stack}{\PHeap \Add \HeapS \Add \HeapF}) \OpRel (\Conf{\CC'}{\Stack'}{\PHeap'' \Add \HeapS' \Add \HeapF}) = \Configuration'
  \tag{$\dag$}
\end{align*}
considered in \defref{def:conf-safe}.(4):
(a) $\RSucc(\Configuration,\Configuration')$ 
and
(b) $\Safe{\RSucc}{n}{\CC'}{\Stack'}{\Heap''}{\LockEnv}{\PQ}$.
There are three main cases:
First, for axioms and \Rule{Next}, 
we have $\CC' = \Skip$; then (a) is immediate, since
$\neg\isInit{\CC}$ and $\neg\isInit{\CC'}$ hold; (b) follows from 
\lemref{lem:soundness-skip} below.
Second, for the \Rule{Init} rule, (a) is immediate by the rule's premise;
(b) follows from \lemref{lem:soundness-ghost-lock} below.
Third, and finally, for all other rules, (a) and (b) follow directly from the induction
hypothesis.
Further details are found in \appref{app:refine-sound}.
\qed
\end{proof}
\begin{lemma}\label{lem:soundness-skip}
  For all $n,\Stack,\Heap,\LockEnv,\PQ$, if $\Stack,\Heap \models \PQ$, then
  $\Safe{\RSucc}{n}{\Skip}{\Stack}{\PHeap}{\LockEnv}{\PQ}$.
\end{lemma}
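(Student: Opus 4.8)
The plan is to prove \lemref{lem:soundness-skip} by a routine induction on the step index $n$, unfolding the recursive definition of configuration safety in \defref{def:conf-safe}. The base case $n = 0$ is immediate, since $\Safe{\RSucc}{0}{\Skip}{\Stack}{\PHeap}{\LockEnv}{\PQ}$ holds unconditionally.

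For the inductive step I would fix $n$, assume $\Stack,\PHeap \models \PQ$, and discharge the four conditions (1)--(4) that \defref{def:conf-safe} imposes on $\Safe{\RSucc}{n+1}{\Skip}{\Stack}{\PHeap}{\LockEnv}{\PQ}$. Condition~(1) holds because $\Skip$ touches no heap address, so $\Reads{\Skip}{\Stack} = \emptyset \subseteq \dom{\PHeap}$. Condition~(2) holds because no rule of the operational semantics in \figref{fig:operational} or \figref{fig:operational:novel} lets a configuration $(\Conf{\Skip}{\Stack}{\PHeap'})$ step to $\Abort$, for any $\PHeap'$. Condition~(3) is exactly the assumption $\Stack,\PHeap \models \PQ$.

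The only place that warrants an explicit remark is condition~(4): here I would observe that $\Skip$ is irreducible, i.e., there is \emph{no} configuration reachable in one step from $(\Conf{\Skip}{\Stack}{\PHeap \Add \HeapS \Add \HeapF})$ --- the only rules that consume a $\Skip$, namely \Rule{SeqS}, \Rule{WithinS}, \Rule{LockS}, and \Rule{ParS}, all require the $\Skip$ to occur inside a strictly larger command, never in isolation. Hence the premise of condition~(4) is never satisfied, so the condition holds vacuously; in particular the recursive obligation $\Safe{\RSucc}{n}{\CC'}{\Stack'}{\PHeap''}{\LockEnv}{\PQ}$ never arises, so the induction hypothesis is in fact not needed and one could equally argue directly for every $n \geq 1$. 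I anticipate no real obstacle: the single thing to state carefully is that $\Skip$ has no outgoing transition, which collapses clause~(4) to a triviality and makes the lemma independent of any reasoning about the refinement successor predicate $\RSucc$.
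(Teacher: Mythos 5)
Your proposal is correct and follows essentially the same route as the paper's proof: both hinge on the observation that a bare $\Skip$ configuration has no outgoing transition, so clause~(4) of \defref{def:conf-safe} --- the only clause involving $\RSucc$ --- holds vacuously. The only cosmetic difference is that the paper discharges clauses (1)--(3) by citing the CSL auxiliary \lemref{lem:aux:skip} rather than verifying them directly as you do; either way the argument needs no induction hypothesis, as you correctly note.
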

\begin{lemma}\label{lem:soundness-ghost-lock}
If $\CC$ contains no two different sub-commands $\CC_1$ and $\CC_2$ such that
   $\Locked{\CC_1} \cap \Locked{\CC_2} \neq \emptyset$
and $\FV{\GhostInv} \cap \Mod{\Command} = \emptyset$, then
\[ 
    \rtriple{}{\LockEnvG}{\PP}{\Command}{\PQ}
    \qimplies
    \rsemtriple{\LockEnv}{\PP \sep \GhostInv}{\DeclareLock{\GhostLock}\Block{\Command}}{\PQ \sep \GhostInv}.
\]
\end{lemma}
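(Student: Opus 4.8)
The plan is to prove this as the refinement-aware soundness statement for the \Rule{Lock} rule, specialised to the ghost lock $\GhostLock$, by induction on the derivation of $\rtriple{}{\LockEnvG}{\PP}{\Command}{\PQ}$: for all $n,\Stack,\PHeap$ with $\Stack,\PHeap\models\PP\sep\GhostInv$ we establish $\Safe{\RSucc}{n}{\DeclareLock{\GhostLock}\Block{\Command}}{\Stack}{\PHeap}{\LockEnv}{\PQ\sep\GhostInv}$. The first step is a reduction to a single new proof obligation: \lemref{lem:safety} applied to the hypothesis gives that $\Command$ is $\True$-safe under $\LockEnvG$, and the standard soundness of the \Rule{Lock} rule (uniform in the lock identifier, and relied on by \lemref{lem:safety}) then yields that $\DeclareLock{\GhostLock}\Block{\Command}$ is $\True$-safe under $\LockEnv$ from $\PP\sep\GhostInv$. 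This already discharges every clause of \defref{def:conf-safe} except condition~(4c); what remains is to show $\RSucc(\Configuration,\Configuration')$ for every operational step $\Configuration\OpRel\Configuration'$ taken while $\GhostLock$ is still declared.

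Because $\GhostLock$ is declared in every such intermediate command, $\isInit{\cdot}$ holds of $\Configuration$, so clause~(ii) of $\RSucc$ is vacuous and it suffices to prove $\valid\AbsNext(\AbsLocsOf{\PHeap},\AbsLocsOf{\PHeap'})$ for the heaps $\PHeap,\PHeap'$ of $\Configuration,\Configuration'$. I would split on the step. The terminating step (rule \Rule{LockS}), which rewrites $\DeclareLock{\GhostLock}\Block{\Skip}$ to $\Skip$, leaves the heap unchanged, so $\AbsNext$ follows from stutter-invariance of $\AbsTS$; its continuation is $\RSucc$-safe by \lemref{lem:soundness-skip}, since the local state satisfies $\PQ\sep\GhostInv$ by the inherited $\True$-safety. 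A step executing some $\NextBlock{C_a}$ nested in $\Command$ is, operationally, the single step $\NextBlock{C_a}\OpRel\Skip$; in the derivation $C_a$ can only have been introduced by the \Rule{Next} rule (up to structural rules), whose premise forces the ghost-address valuations $\vec{o}$ before and $\vec{y}$ after $C_a$ to satisfy $\AbsNext(\vec{o},\vec{y})$, and since $C_a$ is atomic the step has no intermediate heaps and $\AbsLocsOf{\PHeap}=\vec{o}$, $\AbsLocsOf{\PHeap'}=\vec{y}$, so clause~(i) is exactly that premise. Every remaining step cannot change any ghost address: by \assref{ass:ghostlock} the invariant $\GhostInv$ retains permission $\Perm>0$ to each $\AbsLoc{i}$, and once $\GhostLock$ is declared this portion of the heap belongs to the frame $\HeapF$ of \defref{def:conf-safe}(4) and is unavailable to $\Command$ outside a $\Next$-block; since (4a) carries $\HeapF$ through the step unchanged, the value at each $\AbsLoc{i}$ is preserved, so $\AbsNext$ again follows from stutter-invariance. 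The $\RSucc$-safety of the continuation in the last two cases follows from the induction hypothesis.

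The two side conditions of the lemma enter where expected. The condition $\FV{\GhostInv}\cap\Mod{\Command}=\emptyset$ is the usual framing requirement guaranteeing that $\GhostInv$ — in particular the permissions it asserts — remains true across $\Command$'s steps. The condition that no two distinct sub-commands of $\Command$ hold a common lock rules out the spurious $\Abort$ transitions of rules \Rule{WithinL}, \Rule{Par1}, and \Rule{Par2}, so that the inherited CSL soundness genuinely applies. The induction then closes by the routine case analysis on the last rule: \Rule{Seq}, \Rule{Cond}, \Rule{While}, \Rule{Par}, and the structural rules reduce the claim for $\DeclareLock{\GhostLock}\Block{\Command}$ to the claim for the sub-derivations; the axioms and \Rule{Next} yield $\Configuration'$ with command $\Skip$ and are covered above; and \Rule{Print}, which writes $\OutVar=\AbsLoc{1}$, is legal only inside a $\Next$-block and is therefore subsumed by the $\NextBlock{C_a}$ case.

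The step I expect to be the main obstacle is the confinement argument for the ``remaining step'' case: the operational semantics ignores permissions, so the fact that no step outside a $\Next$-block can alter a ghost address is not a property of the operational rules but must be reconstructed from configuration safety, by instantiating the universally quantified frame $\HeapF$ of \defref{def:conf-safe}(4) so that it carries the $\GhostInv$-portion of the heap and appealing to the decomposition guaranteed by \lemref{lem:safety}. A related subtlety, and the reason the proof is organised as an induction on the derivation rather than a black-box use of \lemref{lem:safety}, is that the $\NextBlock{C_a}$ case needs the invariant ``every $\Next$-block encountered during an execution was typed with \Rule{Next}'' to be available along the whole execution, not merely at the root of the derivation.
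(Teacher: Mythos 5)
Your proposal matches the paper's proof in all essentials: both proceed by structural induction on the derivation, isolate \Rule{Next} as the only case where the $\AbsNext$ obligation of $\RSucc$ is discharged non-trivially (by extracting it from the postcondition of that rule's premise via the atomic-execution lemma, \lemref{lem:aux:atom}), and treat every other step as a stutter of \AbsTS, with \lemref{lem:soundness-skip} and the induction hypothesis covering the continuations. The permission-confinement argument you flag as the main obstacle---that the $\GhostInv$-portion of the heap is carried in the frame of \defref{def:conf-safe}(4) and value-agreement under $\Add$ keeps the ghost addresses fixed outside $\Next$-blocks---is precisely the justification the paper leaves implicit behind the phrase ``performing a stutter step.''
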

We will use validity in the proof of trace inclusion.
Furthermore, we need to define the initial configurations that
determine the traces of the concrete implementation. 
We consider those traces of the concrete implementation
that start in a (global) state that at least covers the local state specified
by the precondition $\PP$ and the shared state described by the lock invariants in
$\LockEnv$.
\begin{definition}[Initial Configurations]\label{def:sound-initials}
  For a command $\CC$, an assertion $\PP$, and a lock environment $\LockEnv$, 
  the set $\SoundInitials$ of \emph{initial configurations} is
  \begin{align*}
     \SoundInitials \ddefeq \{ (\Conf{\CC}{\Stack}{\PHeap \Add \HeapS \Add \HeapF}) ~|~ 
    & \isDefined{\PHeap \Add \HeapS \Add \HeapF} \tand \Stack,\PHeap \models \PP \\
    & \tand \Stack, \HeapS \models \bigstar_{\Lock \in \Locks \setminus \Locked{\CC}} \LockEnv(\Lock)  \}~.
  \end{align*}
\end{definition}
To guarantee trace inclusion, we exclude commands that suddenly become ``uninitialized'' because the $\InitializeBlock{\CC}$ block expires but execution continues, \eg, $\InitializeBlock{\CC'}\Seq\Assign{\AbsLoc{2}}{17}\Seq\Initialize\Block{\CC''}$. Such ill-formed commands can easily be detected by a syntactic check.
More precisely, we call a command $\CC$ \emph{continuously initialized} if 
there exist commands $\CC_1,\CC_2$ such that $\CC_1$ contains no sub-command of the form
$\InitializeBlock{\ldots}$ and $\CC = (\CC_1 \Seq \InitializeBlock{\CC_2})$.
\begin{theorem}[Soundness]\label{thm:trace-inclusion}
 For every continuously-initialized command $\CC$, 
 \[ \srtriple{\PP}{\CC}{\PQ} \qimplies \Traces{\SoundInitials} \subseteq \AbsTraces~. \]
\end{theorem}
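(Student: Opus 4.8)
The plan is to take an arbitrary $t \in \Traces{\SoundInitials}$, expose an execution $\Configuration_1 \OpRel \ldots \OpRel \Configuration_m$ with $\Configuration_1 \in \SoundInitials$ that produces $t$, show that every step of this execution satisfies the refinement predicate $\RSucc$, and then read off a path of $\AbsTS$ whose observable projection equals $t$. The two workhorses are \thmref{thm:refine-sound}, which upgrades the derivation $\srtriple{\PP}{\CC}{\PQ}$ to $\RSucc$-configuration safety of $\CC$, and the two clauses defining $\RSucc$, which are exactly the simulation conditions one needs to build an abstract path.

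\emph{Step 1: the execution stays $\RSucc$-safe.} By \defref{def:sound-initials}, write $\Configuration_1 = \CConf{\CC}{\Stack_1}{\PHeap_1 \Add \HeapS \Add \HeapF}$ with $\Stack_1,\PHeap_1 \models \PP$, $\isDefined{\PHeap_1 \Add \HeapS \Add \HeapF}$, and $\Stack_1,\HeapS \models \bigstar_{\Lock \in \Locks \setminus \Locked{\CC}} \LockEnv(\Lock)$. By \thmref{thm:refine-sound}, $\Safe{\RSucc}{n}{\CC}{\Stack_1}{\PHeap_1}{\LockEnv}{\PQ}$ holds for all $n$; fix $n \geq m$ and unfold \defref{def:conf-safe} along the execution. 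Proceeding by induction on the step index $i$, I maintain the standard invariant (as in Vafeiadis'~\cite{Vafeiadis11} CSL adequacy argument) that the global heap of $\Configuration_i = \CConf{\CC_i}{\Stack_i}{\Heap_i}$ splits into a local heap $\PHeap_i$ with $\Safe{\RSucc}{n-i+1}{\CC_i}{\Stack_i}{\PHeap_i}{\LockEnv}{\PQ}$, a shared heap satisfying $\bigstar_{\Lock \in \Locks \setminus \Locked{\CC_i}} \LockEnv(\Lock)$, and a fixed frame $\HeapF$. For the inductive step I invoke clause~(4) of \defref{def:conf-safe} on $\Configuration_i \OpRel \Configuration_{i+1}$, taking its ``$\HeapS$'' to be the invariants of the locks acquired by this step (carved out of the shared heap) and absorbing the remainder of the shared heap into the frame: clause~(2) rules out $\Configuration_{i+1} = \Abort$; clauses~(4a,b) re-establish the split for $\Configuration_{i+1}$ (the invariants of the locks released by the step moving back into the shared heap); clause~(4c) yields $\RSucc(\Configuration_i,\Configuration_{i+1})$; and clause~(4d) gives $\Safe{\RSucc}{n-i}{\CC_{i+1}}{\Stack_{i+1}}{\PHeap_{i+1}}{\LockEnv}{\PQ}$. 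Hence no $\Configuration_i$ is $\Abort$, and $\RSucc(\Configuration_i,\Configuration_{i+1})$ holds for every $i < m$.

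\emph{Step 2: reading off the abstract path.} First note that along any execution of a continuously-initialized command $\CC = \CC_1 \Seq \InitializeBlock{\CC_2}$ (with $\CC_1$ free of $\InitializeBlock{\ldots}$ sub-commands), $\isInit{}$ is monotone: since programmers cannot name $\GhostLock$ we have $\GhostLock \notin \DLocks{\CC}$, so $\neg\isInit{\CC}$; the only way $\GhostLock$ becomes declared is by reducing $\InitializeBlock{\CC_2}$ to $\DeclareLockBlock{\GhostLock}{\CC_2}$ via rule~\Rule{Init}, and thereafter $\GhostLock$ stays declared until $\CC_2$ finishes and the program reaches $\Skip$. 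Hence either no $\Configuration_i$ is initialized, in which case every $\Obs{\Configuration_i}$ is empty and $t = \Emptyword \in \AbsTraces$ (using that $\AbsTS$ admits an initial state), or there are indices $2 \leq k \leq e$ with $\isInit{\CC_i}$ holding exactly for $i \in [k,e]$. In the latter case put $\AbsState_j \defeq \AbsLocsOf{\Heap_{k+j-1}}$ for $j \in [1,e-k+1]$. These are well defined: for $i \in [k,e]$ the command $\CC_i$ contains a sub-command $\DeclareLockBlock{\GhostLock}{\ldots}$, so by the Step-1 split the ghost-lock invariant $\GhostInv$ sits in the shared heap (or, inside a $\NextBlock{\ldots}$, in $\PHeap_i$); by \assref{ass:ghostlock} it entails $\acc{\AbsLoc{1}}{\Perm} \sep \ldots \sep \acc{\AbsLoc{\AbsLength}}{\Perm}$ for some $\Perm > \PermNone$, so the ghost addresses all lie in $\dom{\Heap_i}$, with permission $\PermWrite$ since $\Heap_i$ is a normal heap, whence $\AbsLocsOf{\Heap_i}$ is defined. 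Now $\RSucc(\Configuration_{k-1},\Configuration_k)$ together with $\neg\isInit{\CC_{k-1}}$ and $\isInit{\CC_k}$ yields $\valid \AbsInit(\AbsState_1)$, and for $j \in [1,e-k]$, $\RSucc(\Configuration_{k+j-1},\Configuration_{k+j})$ together with $\isInit{\CC_{k+j-1}}$ yields $\valid \AbsNext(\AbsState_j,\AbsState_{j+1})$; hence $\AbsState_1 \ldots \AbsState_{e-k+1} \in \AbsPaths$. Finally, since $\stdout = \AbsLoc{1}$ we have $\Obs{\AbsState_j} = \Heap_{k+j-1}(\stdout)$, while by \defref{def:cmd:traces} $t = \Obs{\Configuration_1}\ldots\Obs{\Configuration_m} = [\Heap_k(\stdout)]\ldots[\Heap_e(\stdout)]$; therefore $t = \Obs{\AbsState_1}\ldots\Obs{\AbsState_{e-k+1}} \in \AbsTraces$.

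I expect the main obstacle to be Step~1: carrying out, step by step, the split of the global heap into a locally $\RSucc$-safe part, the invariants of the currently released locks, and a fixed frame, while relocating lock invariants across the local/shared boundary on every acquire and release. This is precisely the heart of the CSL adequacy proof, now additionally threading the predicate $\RSucc$ through the recursion, which \thmref{thm:refine-sound} (via \defref{def:conf-safe}) licenses; the details belong in the appendix. By comparison the remaining pieces are routine: the monotonicity of $\isInit{}$ for continuously-initialized commands, the well-definedness of $\AbsLocsOf{\Heap_i}$ once the system is initialized (immediate from \assref{ass:ghostlock}), and the degenerate never-initialized case.
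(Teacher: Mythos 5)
Your proposal is correct and follows essentially the same route as the paper: invoke \thmref{thm:refine-sound}, unfold the $\RSucc$-safety predicate along the concrete execution to obtain $\RSucc(\Configuration_i,\Configuration_{i+1})$ at every step, and then use the two clauses of $\RSucc$ (together with monotonicity of $\isInit{}$ for continuously-initialized commands) to assemble an abstract path from the initialization index onward whose observable projection matches the concrete trace. The paper merely packages your Steps 1--2 into a separate induction (\lemref{thm:incl-m}) and makes the same case split on whether initialization ever occurs and on whether the final command is $\Skip$.
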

A detailed proof is found in \appref{app:trace-inclusion}.

\section{Evaluation}\label{sec:evaluation}
To evaluate our verification technique, we verified seven case studies 
using the separation logic-based automated verifier
Viper~\cite{MuellerSchwerhoffSummers16}. All of our examples verify using Viper's existing verification backends, demonstrating that our methodology is supported by readily available verification tools. 
We will make our case studies available as an artifact.
The examples show the flexibility of our proofs in the four dimensions mentioned in~\secref{sec:introduction} and are summarized in~\tabref{tab:evaluation}. We briefly describe each of the examples in the following.

\begin{table}[t]
\caption{Case studies used for evaluation. \textbf{Data ref.} indicates whether there is interesting data refinement between the abstract model and the implementation. \textbf{Threads} indicates the number of nodes or worker threads in the implementation, where a $*$ means the threads are spawned dynamically after the model was already initialized. \textbf{Sync.} indicates the kind of synchronization primitive used, if any. \textbf{Idiom} indicates the reasoning used on top of our methodology. \textbf{SLOC} indicates standard lines of code including annotations. \textbf{Time} indicates verification time in seconds, measured as an average of the wall-clock runtime over 10 runs using Viper's symbolic execution verification backend on an Intel Core i9-10885H 2.40GHz CPU with 16 GiB of RAM.}
\label{tab:evaluation}
\centering
\begin{adjustbox}{center}
\begin{tabular}{l@{\hskip 0.15in}c@{\hskip 0.15in}c@{\hskip 0.15in}c@{\hskip 0.15in}l@{\hskip 0.15in}l@{\hskip 0.15in}r@{\hskip 0.15in}r}
  \hline
  \hline
  \textbf{Example}
    & \textbf{Data ref.}
    & \textbf{Threads}
    & \textbf{Sync.}
    & \textbf{Idiom} 
    & \textbf{SLOC}
    & \textbf{Time}
  \\
  \hline
  \hline
  \texttt{alternating}
    & ---
    & 2
    & Lock
    & Owicki-Gries
    & 118
    & 3.78 \\
  \texttt{barrier}
    & ---
    & $N$
    & Barrier
    & Guards
    & 383
    & 6.88 \\
  \texttt{cons\_producer}
    & ---
    & 2
    & Lock
    & Guards
    & 213
    & 4.15 \\
  \texttt{echo\_server}
    & ---
    & 1
    & ---
    & ---
    & 69
    & 3.39 \\
  \texttt{ring\_leader}
    & ---
    & $N$
    & ---
    & ---
    & 279
    & 7.74 \\
  \texttt{trees\_product}
    & \checkmark
    & $N*$
    & ---
    & ---
    & 192
    & 3.48 \\
  \texttt{trees\_record}
    & \checkmark
    & $N*$
    & ---
    & Rely-guarantee
    & 268
    & 4.45 \\
  \hline
  \hline
\end{tabular}
\end{adjustbox}
\end{table}

\texttt{alternating} demonstrates how multiple threads can collaborate to achieve the overall behavior of an abstract system. This example is described in~\secref{sec:overview}.

\texttt{barrier} is an adaptation of the barrier example from Armada~\cite{LorchCKPQSWZ20}, originally from Cohen and Schirmer~\cite{CohenSchirmer2010}. In this example, multiple threads are started and must pass a barrier before exiting. A guarded transition system is used to ensure that each thread can only perform transitions related to its own state.

\texttt{cons\_producer} is an example with a consumer-producer setup, in which one thread adds values to a shared buffer and the other consumes them. We show that at all points an invariant is maintained, namely the relationship between the number of values produced, the number of values consumed, and the number of values left in the shared buffer.
\texttt{echo\_server} demonstrates the use of I/O methods to model both standard input and standard output.

\texttt{ring\_leader} implements leader election in a ring, with a \TLA specification adapted from the \TLA examples repository, based on Chang and Roberts~\cite{ChangRoberts1979}.

\texttt{trees\_product} and \texttt{trees\_record} are both examples demonstrating data refinement: the abstract model uses a mathematical datatype to represent a tree of values, whereas the implementation uses a heap-allocated array representation of the tree. The \texttt{trees\_record} example further demonstrates dynamic threading to process the input tree in parallel, and a two-state monotonicity invariant.

\paragraph{Discussion.}
Our case studies demonstrate that our methodology is highly flexible w.r.t.\ to the structure of both programs and proofs. In particular, they use local and shared mutable state (including dynamic heap data structures), concurrency with dynamic thread creation and synchronization via locks and barriers, as well as three different reasoning idioms.

The evaluation also shows that our methodology enables automating refinement proofs using an off-the-shelf verification tool. While we used Viper as a concrete tool, no example relies on features that are genuinely Viper-specific, which supports the claim that our approach is flexible w.r.t.\ to the underlying logic. The verification time for each example is below 8s, which demonstrates that our methodology is well-suited for SMT-based automation.

\section{Related Work}\label{sec:related_work}
In this section, we survey refinement techniques that combine abstract models and executable code. 

Various approaches~\cite{LesaniBC16,RahliVVV18,SergeyWT18,WoosWATEA16} develop implementations that are correct by construction by refining abstract models within Coq and then extracting executable OCaml programs. Similarly, Liu et al.~\cite{nfm20-liu} model distributed systems in Maude's rewriting logic and compile them into  implementations running in distributed Maude sessions. The code extracted by these approaches is typically sub-optimal (for instance, does not use mutable data structures) and cannot interface with existing libraries, which is often necessary in practice. In contrast, our methodology uses bottom-up verification and can handle efficient implementations using concurrency and mutable state.

Trillium~\cite{Trillium} is a refinement technique based on separation logic. Like our methodology, it supports a wide range of program and data structures. Trillium is based on Iris~\cite{Jung2018} and formalized in Coq, which enables foundational proofs at the expense of substantial manual effort. Trillium inherits some of Iris's limitations. In particular, it is limited to finitary behaviors and, thus, does not support the common case that abstract models choose a value non-deterministically from an infinite set. Moreover, Trillium expresses coupling invariants via Iris's invariants, which complicates reasoning about system initialization, especially allocation. Our methodology does not have these limitations. Trillium supports liveness properties, which we do not handle yet.

Armada~\cite{LorchCKPQSWZ20} supports the verification of concurrent, high-performance code written in a C-like language. To achieve refinement against an abstract model, the user specifies a sequence of steps to gradually transform the implementation into the specification. Non-trivial refinement steps require complex Dafny~\cite{Leino10} proofs showing a connection between two state machines. Unlike Armada, our methodology does not convert programs to state machines and the coupling between the abstract model and the implementation can be much looser.

The CIVL verifier~\cite{Hawblitzel2015,KraglQH20} also organizes the refinement proof into multiple layers. Each layer is a \emph{structured concurrent program}, where the concurrent behavior is reflected in the program structure. This structure simplifies the proof obligations and allows automation, but also reduces program flexibility. Refinement steps are based on a set of  trusted tactics. By contrast, our methodology imposes no restrictions on the program or proof structure.

Igloo~\cite{Igloo} connects abstract models to concrete implementations via dedicated I/O specifications~\cite{Penninckx0P15}. Similarly to our work, they support a variety of separation logics to reason about concrete implementations. However, their technique does not allow threads to perform I/O operations concurrently, whereas our methodology has no such limitation.

Similar to our methodology, IronFleet~\cite{HawblitzelHKLPR15} embeds abstract models as ghost state into executable programs and automates verification using an SMT-based verifier, in their case Dafny. However, their refinement technique imposes severe restrictions on the executable program. It must be sequential and its structure must mirror the structure of the abstract model. IronFleet supports both safety and liveness properties,
whereas our approach focuses on safety properties and leaves liveness as future work.

The refinement technique~\cite{Koh0LXBHMPZ19} used in  DeepSpec~\cite{DeepSpecPosition} is based on the Verified Software Toolchain~(VST)~\cite{CaoBGDA18}, a framework for verifying C programs via a separation logic embedded in Coq. Instead of transition systems, they specify the intended system behavior using interaction trees~\cite{XiaZHHMPZ20}, which are embedded into VST's separation logic. In contrast, our methodology allows us to apply standard separation logics and existing program verifiers.

Oortwijn and Huisman~\cite{OortwijnH19} embed process calculus models into a concurrent SL, which is automated using Viper. Their refinement approach preserves state assertions, but it is unclear whether arbitrary trace properties are preserved.

\section{Conclusion}\label{sec:conclusion}
We have introduced a methodology for refinement proofs in separation logic that is flexible in terms of the type of abstract model used, the structure of the concrete implementation, the underlying logic and tool chain, as well as the structure of the proofs themselves. We have formalized the methodology on top of concurrent separation logic
and demonstrated its applicability on several case studies using the automated verifier Viper.
As future work, we plan to extend our methodology to liveness properties.


%
%
\bibliographystyle{splncs04}
\bibliography{references}
\newpage
\appendix
\section{Quick Reference}

As a quick reference, \tabref{tab:notation} summarizes the notational conventions that have
been introduced in \secref{sec:preliminaries} (above the line) and \secref{sec:methodology} (below the line).
{\renewcommand{\arraystretch}{1.2}
\begin{table}[t]
\caption{Notational conventions and metavariables used throughout the paper.}
\label{tab:notation}
\centering
\begin{tabular}{l@{\hskip 0.15in}l@{\hskip 0.15in}l@{\hskip 0.15in}l}
  \hline
  \hline
  \textbf{Entities} 
  & 
  \textbf{Metavariables} 
  &
  \textbf{Domain}
  &
  \textbf{Defined}
  \\
  \hline
  \hline
  Commands & \CC & \Commands & \secref{sec:sound:language} \\
  Variables & \Var, $y$, $z$, \ldots & \Vars & \secref{sec:sound:language} \\
  Expressions & \EE & $\Stacks \to \Vals$ & \secref{sec:sound:semantics} \\
  Locks & \Lock, \CLock & \Locks & \secref{sec:sound:language} \\
  Values & \Val & \Vals & \secref{sec:sound:states} \\
  Stacks & \Stack & $\Stacks$  & \secref{sec:sound:states} \\
  Heaps  & \PHeap & $\PHeaps$ & \secref{sec:sound:assertions} \\
  Normal heaps & \Heap & $\Heaps$ & \secref{sec:sound:states} \\
  Addresses & \Loc & $\Locs$ & \secref{sec:sound:states} \\
  Configurations & \Configuration & \Confs & \secref{sec:sound:semantics} \\
  Assertions & $\PP, \PQ, \PR, \ldots$ & \SL & \secref{sec:sound:assertions} \\
  Lock environments & $\LockEnv$ & & \secref{sec:sound:logic} \\
  \hline
  Ghost addresses & $\stdout$, $\AbsLocs$ & $\ALocs$ & \secref{sec:methodology:state} \\
  Ghost lock with invariant & $\LockDef{\GhostLock}{\GhostInv}$ &&  \secref{sec:methodology:locks} \\
  \hline
  \hline
\end{tabular}
\end{table}
}


\section{Lemmas from CSL}
We will use the following lemmas taken from~\cite{Vafeiadis11}:
\begin{lemma}\label{lem:aux:skip}
  For all $n,\Stack,\Heap,\LockEnv,\PQ$, if $\Stack,\Heap \models \PQ$, then
  $\Safe{\True}{n}{\Skip}{\Stack}{\PHeap}{\LockEnv}{\PQ}$.
\end{lemma}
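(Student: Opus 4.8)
The plan is to prove the claim by a simple case analysis on $n$, unfolding the definition of configuration safety from \defref{def:conf-safe}. The one structural fact I will exploit is that $\Skip$ is a \emph{terminal} command: no rule of the operational semantics in \figref{fig:operational} fires on a bare $\Skip$. Once that is in hand, every clause of the safety predicate is either immediate or vacuous, and no genuine induction is needed.

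For $n = 0$ there is nothing to do, since $\Safe{\True}{0}{\Skip}{\Stack}{\PHeap}{\LockEnv}{\PQ}$ holds by the first clause of \defref{def:conf-safe}. For $n = m+1$, I verify the four obligations of the successor clause in turn. Obligation~(1) is immediate because $\Skip$ accesses no addresses, so $\Reads{\Skip}{\Stack} = \emptyset \subseteq \dom{\PHeap}$. Obligation~(2) holds because there is no rule whose left-hand side is the bare command $\Skip$; in particular no rule takes $(\Conf{\Skip}{\Stack}{\PHeap \Add \HeapF})$ to $\Abort$, for any frame $\HeapF$. Obligation~(3) requires $\Stack, \PHeap \models \PQ$ precisely because $\CC = \Skip$, and this is exactly the hypothesis of the lemma.

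The only obligation that deserves a word is~(4), which I claim holds vacuously. Inspecting \figref{fig:operational}, every rule that mentions $\Skip$ on a left-hand side has $\Skip$ appearing only as a \emph{strict subterm} of a larger command---as in $\Skip \Seq \Command_2$ (\Rule{SeqS}), $\Skip \Par \Skip$ (\Rule{ParS}), $\WithinBlock{\Lock}{\Skip}$ (\Rule{WithinS}), or $\DeclareLockBlock{\Lock}{\Skip}$ (\Rule{LockS})---and never as the whole command. Consequently there is no configuration $(\Conf{\CC'}{\Stack'}{\PHeap'})$ with $(\Conf{\Skip}{\Stack}{\PHeap \Add \HeapS \Add \HeapF}) \OpRel (\Conf{\CC'}{\Stack'}{\PHeap'})$, so the universally quantified obligation~(4) is satisfied without producing any premises. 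In particular, no recursive instance $\Safe{\True}{m}{\CC'}{\Stack'}{\PHeap''}{\LockEnv}{\PQ}$ ever has to be discharged, so the argument does not even require an induction hypothesis and the claim holds uniformly for every $n$.

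The main---indeed the only---obstacle is establishing that $\Skip$ admits no transition, which is a finite inspection of the operational rules rather than a real difficulty; once that is observed, clauses~(1)--(3) follow directly and clause~(4) is vacuous.
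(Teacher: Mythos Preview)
Your argument is correct. The paper does not actually supply its own proof of this lemma---it is listed in \appref{app:refine-found} as a fact taken from~\cite{Vafeiadis11}---but your reasoning is precisely the standard one, and it mirrors the argument the paper gives for the closely related \lemref{lem:soundness-skip}: clauses~(1)--(3) are immediate, and clause~(4) holds vacuously because the operational semantics admits no transition out of $\Skip$.
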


\begin{lemma}\label{lem:aux:atom}
  If $\forall n\colon \Safe{\True}{n}{\CC}{\Stack}{\PHeap}{\LockEnv}{\PQ}$
  and $\isDefined{\PHeap \Add \HeapF}$ and
  \[ 
    \Conf{\CC}{\Stack}{\PHeap \Add \HeapF} \OpRelTrans \Conf{\Skip}{\Stack'}{\Heap'},
  \]
  then there exists $\PHeap''$ such that $\Heap' = \PHeap'' \Add \HeapF$ and 
  $\Stack', \PHeap'' \models \PQ$.
\end{lemma}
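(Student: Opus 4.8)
The plan is to prove the statement by induction on the number $k$ of steps in the terminating execution $\CConf{\CC}{\Stack}{\PHeap \Add \HeapF} \OpRelTrans \CConf{\Skip}{\Stack'}{\Heap'}$, after \emph{strengthening} it so that the frame $\HeapF$ ranges over \emph{all} heaps with $\isDefined{\PHeap \Add \HeapF}$ while the local heap correspondingly varies. This generalization is essential: it lets the frame absorb and release the portions of the heap that the safety predicate attributes to locks as the computation acquires and releases them, keeping the invariant $\Heap_i = \PHeap_i \Add \HeapF_i$ (current local heap plus current frame) intact at every configuration along the run. The hypothesis $\Safe{\True}{n}{\CC}{\Stack}{\PHeap}{\LockEnv}{\PQ}$ for \emph{all} $n$ is what allows me to instantiate safety at whatever index each prefix of the execution demands. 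Since the success predicate is fixed to $\True$, clause~(c) of~\saferef{4} is vacuous and plays no role, so only the resource-bookkeeping clauses of~\defref{def:conf-safe} are relevant.

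For the base case $k = 0$ we have $\CC = \Skip$, $\Stack' = \Stack$, and $\Heap' = \PHeap \Add \HeapF$; instantiating the hypothesis at index $1$, clause~(3) of~\defref{def:conf-safe} gives $\Stack, \PHeap \models \PQ$, so $\PHeap'' \defeq \PHeap$ witnesses the claim. For the step case I split the run into a head step $\CConf{\CC}{\Stack}{\PHeap \Add \HeapF} \OpRel \CConf{\CC_1}{\Stack_1}{\Heap_1}$ followed by a tail of length $k-1$, instantiate safety at index $k$, and feed the head step into clause~(4) of~\defref{def:conf-safe}. When the head step acquires no new lock I take the acquired-resource witness $\HeapS = \EmptyHeap$, since then $\bigstar_{\Lock \in \Locked{\CC_1}\setminus\Locked{\CC}} \LockEnv(\Lock) = \emp$; clause~(4) returns a released-resource witness $\HeapS'$ with $\Heap_1 = \PHeap_1 \Add \HeapS' \Add \HeapF$ and $\Safe{\True}{k-1}{\CC_1}{\Stack_1}{\PHeap_1}{\LockEnv}{\PQ}$. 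I then apply the generalized induction hypothesis to the tail with local heap $\PHeap_1$ and the \emph{enlarged} frame $\HeapS' \Add \HeapF$, i.e.\ I push any just-released lock footprint into the frame.

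The symmetric, and more delicate, situation is an \emph{acquiring} head step, where clause~(4) forces a nonempty $\HeapS$ with $\Stack, \HeapS \models \LockEnv(\Lock)$. Here I exploit that the lock-manipulation rules (\Rule{With}, \Rule{WithinS}) neither read nor modify the heap, so the head step has the same effect on $\PHeap_i \Add \HeapF_i$ as on $\PHeap_i \Add \HeapS \Add \HeapF_i'$ for a re-splitting $\HeapF_i = \HeapS \Add \HeapF_i'$ of the current frame; clause~(4) then hands me safety of $\CC_1$ at the \emph{smaller} frame $\HeapF_i'$, so the invariant is again preserved and I may recurse. Because every acquire borrows its footprint out of the frame and every matching release returns it, the two movements cancel over a run that begins and ends with no lock held ($\Locked{\CC}=\Locked{\Skip}=\emptyset$), so the frame recovered at $\Skip$ is precisely the original $\HeapF$, and clause~(3) gives $\Stack', \PHeap'' \models \PQ$ for the final local heap. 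The main obstacle is exactly this acquiring case: I must justify that the footprint $\HeapS$ required by safety can be carved out of the current frame. This is where I use that the \emph{given} execution reaches $\Skip$ and hence never steps to $\Abort$ (clauses~(1)--(2) of~\defref{def:conf-safe}): the continuation $\CC_1$ can only run without aborting if the memory that safety assigns to the acquired lock is physically present, which forces $\HeapF_i$ to contain a suitable $\HeapS$. Making this footprint-extraction argument precise, and verifying that the acquire/release bookkeeping balances so that the final frame is unchanged, is the part that requires the most care; the remaining per-rule reasoning is routine reassociation of $\Add$.
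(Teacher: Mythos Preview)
The paper does not prove this lemma; it is introduced as ``taken from~\cite{Vafeiadis11}'' and invoked as a black box in the proof of \lemref{lem:soundness-ghost-lock}. So there is no paper proof to compare against, and the question is simply whether your argument is correct.

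Your induction is the standard one and is correct whenever no step of the run acquires or releases a lock: then $\Locked{\CC'}\setminus\Locked{\CC} = \Locked{\CC}\setminus\Locked{\CC'} = \emptyset$, so $\HeapS = \HeapS' = \EmptyHeap$ at every unfolding of clause~(4), the successor local heap is uniquely determined by cancellation from $\PHeap' = \PHeap'' \Add \HeapF$, and the frame never changes. That is exactly the situation in which the paper applies the lemma (the body of a $\Next$ block is atomic and contains no $\With$/$\Within$).

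Your handling of the acquiring case, however, does not work---and cannot work, because the lemma as literally stated is \emph{false} for commands that acquire locks. Take $\LockEnv(\Lock) = \alloc{a}{1}$ and $\CC = \WithBlock{\Lock}{\True}{\Mutation{a}{1}}$. One derives $\triple{}{\LockEnv}{\emp}{\CC}{\emp}$, hence by \thmref{thm:csl-sound} $\forall n\colon \Safe{\True}{n}{\CC}{\Stack}{\EmptyHeap}{\LockEnv}{\emp}$. With $\PHeap = \EmptyHeap$ and $\HeapF = \{a \xmapsto{1} 0\}$ the run reaches $\Skip$ with heap $\{a \xmapsto{1} 1\}$, and there is no $\PHeap''$ with $\{a \xmapsto{1} 1\} = \PHeap'' \Add \{a \xmapsto{1} 0\}$. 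This simultaneously refutes both of your acquiring-case arguments: the run does not abort (so non-abortion does not force $\HeapF$ to contain a sub-heap satisfying the lock invariant), and the ``borrow on acquire, return on release'' movements do \emph{not} cancel, since the released footprint $\HeapS'$ need only satisfy the invariant, not equal the acquired $\HeapS$. The right fix is to restrict the lemma to commands whose execution never changes $\Locked{\cdot}$; your proof then collapses to the easy case above.
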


\section{Proof of \lemref{lem:safety}}\label{app:refine-found}
\subsubsection{Claim.}
  If $\srtriple{\PP}{\CC}{\PQ}$, then $\semtriple{}{\LockEnv}{\PP}{\CC}{\PQ}$.

\begin{proof}
  By \defref{def:valid:csl}, it suffices to show that
\begin{align}
     & \srtriple{\PP}{\CC}{\PQ} \qimplies \\
     & \qquad \forall n, \Stack, \PHeap\colon 
       \qquad  \Stack,\PHeap \models \PP \qimplies
       \Safe{\True}{n}{\CC}{\Stack}{\Heap}{\LockEnv}{\PQ}~. \notag
\end{align}
Since $\Safe{\True}{0}{\CC}{\Stack}{\PHeap}{\LockEnv}{\PQ}$ holds always by \defref{def:conf-safe}, it suffices to prove that
$\Safe{\True}{n+1}{\CC}{\Stack}{\PHeap}{\LockEnv}{\PQ}$ holds
for an arbitrary, but fixed $n \in \Nats$.

  By structural induction over the rules for of our refinement logic for deriving judgments $\srtriple{\PP}{\CC}{\PQ}$
  (found in \figref{fig:proof-rules} and \figref{fig:proof-rules-novel}).

We present the case $\Rule{Print}$ in detail further below.
As discussed in \secref{sec:methodology:logic},
the rule \Rule{Init} is a variant of the CSL rule \Rule{Lock} with a stronger precondition, and the rule \Rule{Next} is a special case of the CSL rule \Rule{With}; we thus omit detailed proofs.
since all other rules are identical to the CSL proof rules; their proof is thus completely analogous to the proof of \thmref{thm:csl-sound}.

\paragraph{The case \Rule{Print}.}
Recall the rule
\[
\inference[\Rule{Print}]{
}{
  \rtriple{}{\LockEnv,\LockDef{\InitLock}{\emp}}{
        \pt{\OutVar}{\PermWrite}{\EE'}
    }{
        \Print{\EE}
    }{
        \pt{\OutVar}{\PermWrite}{(\Append{\EE}{\EE'})}
    }
}
\]
and assume
\begin{align}
\Stack,\PHeap \models \pt{\OutVar}{\PermWrite}{\EE'}.
\end{align}
Consequently, 
\begin{align}
\PHeap \eeq \Set{ \stdout \xmapsto{\PermWrite} \EE'(\Stack) }.
\label{eq:cs:3}
\end{align}
We discharge all items in \defref{def:conf-safe} to show that
\begin{align}
  \Safe{\True}{n+1}{\Print{\EE}}{\Stack}{\PHeap}{\LockEnv,\LockDef{\InitLock}{\emp}}{\pt{\OutVar}{\PermWrite}{(\Append{\EE}{\EE'})}}.
\end{align}
For \saferef{1}, consider the following:
\begin{align}
  & \Reads{\Print{\EE}}{\Stack} = \Set{\stdout} \\
 \eeq & \dom{\Heap} \tag{by \eqref{cs:3}}
\end{align}
\saferef{2} is immediate by \eqref{cs:3} and the rules of our operational semantics.
\saferef{3} is trivial.
For \saferef{4}, assume $\CC'$, $\Stack'$, $\HeapF$, $\HeapS$, $\PHeap'$ such that
\begin{align}
  & \Stack,\HeapS \models \bigstar_{\Lock \in \Locked{\CC'}\setminus\Locked{\CC}} \LockEnv(\Lock)
\\
   \tand \notag \\
  &
  (\Conf{\Print{\EE}}{\Stack}{\PHeap \Add \HeapS \Add \HeapF}) \OpRel (\Conf{\CC'}{\Stack'}{\PHeap'})
\end{align}
By our operational semantics and \eqref{cs:3}, there is only one transition as above.
For this transition, we have
\begin{align}
  & \CC' = \Skip, \qand
  \HeapS = \emp, \qand \\
  & \Heap' = \Set{ \stdout \xmapsto{\PermWrite} \Append{\EE(\Stack)}{\EE'(\Stack)} } \Add \HeapF
\end{align}
Hence, for $\PHeap'' = \Heap'$ and $\HeapS' = \emp$, 
\saferef{(a)-(c)} hold immediately.
\saferef{(d)} holds by \lemref{lem:aux:skip}.
\qed
\end{proof}

\section{Missing lemmas for the proof of \thmref{thm:refine-sound}}\label{app:refine-sound}
\subsection{Proof of \lemref{lem:soundness-skip}}
\subsubsection{Claim.}
For all $n,\Stack,\Heap,\LockEnv,\PQ$, if $\Stack,\Heap \models \PQ$, then
  $\Safe{\RSucc}{n}{\Skip}{\Stack}{\PHeap}{\LockEnv}{\PQ}$.
\begin{proof}
  By definition, $\Safe{\RSucc}{0}{\Skip}{\Stack}{\PHeap}{\LockEnv}{\PQ}$ holds always.
  Assume 
  \begin{align}
    \Stack,\Heap \models \PQ.
  \end{align}
  By \lemref{lem:aux:skip}, we know that
  \begin{align}
    \Safe{\True}{n+1}{\Skip}{\Stack}{\PHeap}{\LockEnv}{\PQ}.
    \label{eq:sk}
  \end{align}
  Then $\Safe{\RSucc}{n+1}{\Skip}{\Stack}{\PHeap}{\LockEnv}{\PQ}$ holds as well, because
  \saferef{(1-3)} holds by \eqref{sk}.
  Furthermore, since our operational semantics does not admit any transition starting with
  command \Skip, \saferef{4} holds vacuously.
  \qed
\end{proof}

\subsection{Proof of \lemref{lem:soundness-ghost-lock}}
\subsubsection{Claim.}
If $\CC$ contains no two different sub-commands $\CC_1$ and $\CC_2$ such that
   $\Locked{\CC_1} \cap \Locked{\CC_2} \neq \emptyset$
and $\FV{\GhostInv} \cap \Mod{\Command} = \emptyset$,
\[ 
    \rtriple{}{\LockEnvG}{\PP}{\Command}{\PQ}
    \qimplies
    \rsemtriple{\LockEnv}{\PP \sep \GhostInv}{\DeclareLock{\GhostLock}\Block{\Command}}{\PQ \sep \GhostInv}.
\]
\begin{proof}[Sketch]
By structural induction on the rules of our refinement logic we show that
    $\rtriple{}{\LockEnvG}{\PP}{\Command}{\PQ}$
    and $\Stack,\PHeap \sep \GhostInv \models \PP$ implies
\begin{align}
    \forall n\in\Nats\colon \quad
    \Safe{\RSucc}{n}{\DeclareLockBlock{\GhostInv}{\CC}}{\Stack}{\Heap}{\LockEnv}{\PQ \sep \GhostInv}~.
\end{align}
We present the case for the \Rule{Next} in detail as it is the only rule, where the
predicate $\RSucc$ is not immediately discharged by either applying the induction
hypothesis or performing a stutter step.

The remaining cases are very similar to Vafeiadis~\cite{Vafeiadis11} soundness proof 
for CSL (and in particular require an additional complete induction on $n$ for some
cases, such as parallel composition).

\paragraph{The case \textnormal{\Rule{Next}}.}
Recall from \figref{fig:proof-rules-novel} the rule
\begin{align*}
\inference[\Rule{Next}]{
  \isAtomic{\CC} \\
  \srtriple{
      (\Aabsvars{\vec{o}} \wedge \GhostInv) \sep \PP
  }{
      \Command
  }{
      \underbrace{\exists \vec{y} \mydot \Aabsvars{\vec{y}} \wedge \AbsNext(\vec{o},\vec{y}) \wedge
      \GhostInv) \sep \PQ)}_{\PR}
  }
}{
  \rtriple{}{\LockEnv,\LockDef{\GhostLock}{\GhostInv}}{
      \PP
  }{
      \NextBlock{\Command}
  }{
      \PQ
  }
}
\end{align*}
and assume that
\begin{align} 
  \Stack, \PHeap \models \PP \sep \GhostInv.
  \label{eq:next1}
\end{align}
We then need to show that, for all $n \in \Nats$, 
\begin{align} 
\Safe{\RSucc}{n}{\DeclareLockBlock{\GhostLock}{\NextBlock{\CC}}}{\Stack}{\Heap}{\LockEnv}{\PQ \sep \GhostInv}.
\end{align}
The case $n = 0$ always holds by \defref{def:conf-safe}.
For $n > 0$, we discharge \saferef{(1)-(4)}:
\begin{itemize}
  \item \saferef{(1)} is immediate, since  
        $\Reads{\DeclareLockBlock{\GhostLock}{\NextBlock{\CC}}}{\Stack} = \emptyset$;
  \item \saferef{(2)} is immediate, by \lemref{lem:safety} and the rules of our operational
        semantics;
  \item \saferef{(3)} is trivial; and
  \item for \saferef{(4)} assume
            $\CC'$, $\Stack'$, $\HeapF$, $\HeapS$, $\PHeap'$ such that
        \begin{align}
      &      \Stack,\HeapS \models \bigstar_{\Lock \in \Locked{\CC'}\setminus\Locked{\CC}} \LockEnv(\Lock) \\
& \tand \notag\\
             & (\Conf{\CC}{\Stack}{\PHeap \Add \HeapS \Add \HeapF}) \OpRel (\Conf{\CC'}{\Stack'}{\PHeap'}).
        \end{align}
        Our operational semantics has exactly one rule that admits such a transition, namely
        the one with premise
        $\Conf{\Command}{\Stack}{\Heap \Add \HeapS \Add \HeapF} \OpRelTrans \Conf{\Skip}{\Stack'}{\Heap'}$.
        Now, by the \Rule{Next} rule's premise and \thmref{lem:safety}, we have
        \begin{align}
          \forall m\colon \Safe{\True}{m}{\CC}{\Stack}{\PHeap}{\LockEnv}{\PR}.
        \end{align}
        Hence, by \lemref{lem:aux:atom}, there exist $\PHeap''$ 
        such that
        \begin{align}
            \Heap' \eeq \PHeap'' \Add (\HeapS \Add \HeapF) \tand \\
            \Stack', \PHeap'' \models \PR.
        \end{align}
        Then \saferef{(4a)} holds for $\PHeap''$ as above and $\HeapS' = \EmptyHeap$;
        \saferef{(4b)} holds, since $\Locked{\DeclareLockBlock{\GhostLock}{\NextBlock{\CC}}\DeclareLockBlock{\GhostLock}{\NextBlock{\CC}}} = \Locked{\DeclareLockBlock{\GhostLock}{\Skip}}$;
        \saferef{(4c)} follows from 
  \eqref{next1} and $\Stack', (\Heap'' \Add (\HeapS \Add \HeapF)) \models \PR$, where $\PR$ is the \Rule{Next} rule's postcondition; and
        \saferef{(dc)} holds by \lemref{lem:soundness-skip}.
\end{itemize}
\qed
\end{proof}

%

\section{Proof of \thmref{thm:trace-inclusion}}\label{app:trace-inclusion}
\subsubsection{Claim.}
 For every continuously initialized command $\CC$, 
 \[ \srtriple{\PP}{\CC}{\PQ} \qimplies \Traces{\SoundInitials} \subseteq \AbsTraces~. \]

\begin{proof}
  By \thmref{thm:refine-sound}, we have
  \begin{align}
    \rsemtriple{\LockEnv}{\PP}{\CC}{\PQ}
    \label{eq:ti1}
  \end{align}
  Moreover, since $\isInit{\CC}$ holds iff $\GhostLock \in \Locks{\CC}$ and there is no proof rule for (sub-)commands of the form 
  $\DeclareLock{\GhostLock}{\ldots}$, we have
  \begin{align}
    \neg\isInit{\CC}
    \label{eq:ti2}
  \end{align}
  Now, assume that
  \begin{align}
     \Configuration_1 = (\Conf{\CC_1}{\Stack_1}{\Heap_1}) \in \SoundInitialsFor{\CC_1}
    \label{eq:ti3} 
  \end{align}
  and, for some arbitrary, but fixed, $n \in \Nats$, 
  \begin{align}
      \Configuration_1 \OpRel \Configuration_2 
      \OpRel \ldots \OpRel \Configuration_n~.
    \label{eq:ti4} 
  \end{align}
By \eqref{ti1}, \eqref{ti3}, \defref{def:sound-initials}, and \defref{def:valid:refinement} we have
\begin{align}
    \Safe{\RSucc}{n+2}{\CC}{\Stack}{\Heap}{\LockEnv}{\PQ}~.
\end{align}
By \defref{def:conf-safe}.(2) and \eqref{ti3}, we can safely assume that
\begin{align}
    \forall i \in \Set{1,\ldots,n}\colon \quad
    \Configuration_i = (\Conf{\CC_i}{\Stack_i}{\Heap_i})~.
\end{align}
We distinguish two cases:
First, assume there exists an $m' \in \Set{1,\ldots,n}$ such that $\isInit{\Command_{m'}}$;
let $m$ be the minimal such index $m'$. 
By \eqref{ti2}, we know that $m \geq 2$.
Hence, by \lemref{thm:incl-m} below, 
\begin{align}
  \AbsLocsOf{\Heap_m} \ldots \AbsLocsOf{\Heap_n} \in \AbsPaths
  \label{eq:incl}
\end{align}
Now, if $\Command_n = \Skip$, then we have:
\begin{align*}
      & \Traces{\SoundInitialsFor{\CC_1}}  \\
  ~\ni~ & \Obs{\Configuration_1} \ldots \Obs{\Configuration_n} \\
  \eeq & \underbrace{\Obs{\Configuration_1} \ldots \Obs{\Configuration_{m-1}}}_{\eeq \Emptyseq} \Obs{\Configuration_{m}} \ldots \Obs{\Configuration_{n-1}} 
 \underbrace{\Obs{\Configuration_n}}_{\eeq \Emptyseq} 
         \tag{Def. of $\Obs{.}$, $\neg\isInit{\Skip}$} \\
  \eeq & \Obs{\Configuration_{m}} \ldots \Obs{\Configuration_{n-1}} 
         \tag{Def. of $\Obs{.}$; $m$ is minimal by assumption} \\
  \eeq &
         \AbsLocsOf{\Heap_m} \ldots \AbsLocsOf{\Heap_{n-1}} \tag{Def. of $\Obs{.}$} \\
  ~\in~ & \AbsPaths. \tag{by (\ref{eq:incl})}
\end{align*}
Conversely, if $\Command_n \neq \Skip$, then we have:
\begin{align*}
      & \Traces{\SoundInitialsFor{\CC_1}}  \\
  ~\ni~ & \Obs{\Configuration_1} \ldots \Obs{\Configuration_n} \\
  \eeq & \underbrace{\Obs{\Configuration_1} \ldots \Obs{\Configuration_{m-1}}}_{\eeq \Emptyseq} \Obs{\Configuration_{m}} \ldots \Obs{\Configuration_{n-1}} \Obs{\Configuration_n} \\
  \eeq & \Obs{\Configuration_{m}} \ldots \Obs{\Configuration_{n-1}} \Obs{\Configuration_n} 
         \tag{Def. of $\Obs{.}$; $m$ is minimal by assumption} \\
  \eeq &
         \AbsLocsOf{\Heap_m} \ldots \AbsLocsOf{\Heap_n} \tag{Def. of $\Obs{.}$} \\
  ~\in~ & \AbsPaths. \tag{by (\ref{eq:incl})}
\end{align*}
Second, assume there for all $m' \in \Set{1,\ldots,n}$, we have $\neg\isInit{\Command_{m'}}$.
Then, by Def. of $\Obs{.}$, we have
\begin{align}
  \Traces{\SoundInitialsFor{\CC_1}}  
  ~\ni~ \Obs{\Configuration_1} \ldots \Obs{\Configuration_n}
  \eeq \Emptyseq
  ~\in~ \AbsPaths.
\end{align}
\end{proof}

\begin{lemma}\label{thm:incl-m}
  For all $2 \leq m \leq n$ and all 
  $(\Conf{\CC_1}{\Stack_1}{\Heap_1}), \ldots, (\Conf{\CC_n}{\Stack_n}{\Heap_n})$, if
  \begin{enumerate}[(i)]
    \item\label{item:im1} $\CC_1$ is continuously initialized and
    \item\label{item:im2} $\rsemtriple{}{\PP}{\CC_1}{\PQ}$ and
    \item\label{item:im3} $(\Conf{\CC_1}{\Stack_1}{\Heap_1}) \in \SoundInitials$ and
    \item\label{item:im4} $\Conf{\CC_1}{\Stack_1}{\Heap_1} \OpRel \ldots 
           \OpRel \Conf{\CC_n}{\Stack_n}{\Heap_n}$ and
    \item\label{item:im5} $\isInit{\CC_m}$ and, for all $j < m$, not $\isInit{\CC_j}$,
  \end{enumerate}
  then $\AbsLocsOf{\Heap_m} \ldots \AbsLocsOf{\Heap_n} \in \AbsPaths$.
\end{lemma}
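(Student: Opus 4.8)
\emph{Proof plan.}
The plan is to walk along the execution in \itref{im4} and extract, transition by transition, that the predicate $\RSucc$ holds between consecutive configurations; its two cases then directly yield the two defining conditions of a path of $\AbsTS$ for the sequence $\AbsLocsOf{\Heap_m}\ldots\AbsLocsOf{\Heap_n}$. First I would unfold \defref{def:valid:refinement} using \itref{im2} and \itref{im3}: writing $\Heap_1 = \PHeap \Add \HeapS \Add \HeapF$ as prescribed by \defref{def:sound-initials} (so that $\Stack_1,\PHeap \models \PP$ and $\HeapS$ carries the invariants of the locks not held by $\CC_1$), validity of the refinement triple gives $\Safe{\RSucc}{k}{\CC_1}{\Stack_1}{\PHeap}{\LockEnv}{\PQ}$ for every $k$.

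Next I would unroll clause \saferef{4} of \defref{def:conf-safe} along the transitions $(\Conf{\CC_i}{\Stack_i}{\Heap_i}) \OpRel (\Conf{\CC_{i+1}}{\Stack_{i+1}}{\Heap_{i+1}})$, maintaining the standard invariant that relates the operational heap $\Heap_i$ to its split into a local part on which $\CC_i$ is safe, the footprints of the invariants of the locks held by $\CC_i$, and a remainder absorbing the footprints of the remaining lock invariants and the original frame (exactly as in Vafeiadis'~\cite{Vafeiadis11} soundness argument, moving a lock invariant between these parts whenever a lock is acquired or released). This yields, for every $i \in [1,n)$, that $\RSucc((\Conf{\CC_i}{\Stack_i}{\Heap_i}),(\Conf{\CC_{i+1}}{\Stack_{i+1}}{\Heap_{i+1}}))$ holds; equivalently, $\valid \AbsNext(\AbsLocsOf{\Heap_i},\AbsLocsOf{\Heap_{i+1}})$ whenever $\isInit{\CC_i}$, and $\valid \AbsInit(\AbsLocsOf{\Heap_{i+1}})$ whenever $\neg\isInit{\CC_i}$ but $\isInit{\CC_{i+1}}$. (Well-definedness of $\AbsLocsOf{\Heap_i}$ throughout is a routine side condition: ghost addresses are never (de)allocated by the language, and they are present in $\Heap_1$ since the precondition of the trailing $\InitializeBlock{\cdot}$ requires $\GhostInv$, which by \assref{ass:ghostlock} owns all of them.)

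The second step, which I expect to be the main obstacle, is to show that initialization persists once reached: $\isInit{\CC_i}$ for all $i \in [m,n)$. This is where \itref{im1} is essential. Since $\CC_1$ is continuously initialized, it consists of a prefix containing no subcommand of the form $\InitializeBlock{\ldots}$ followed by a single \emph{trailing} $\InitializeBlock{\cdot}$, and since $\GhostLock$ is declared only by the operational rule \Rule{Init} desugaring an $\InitializeBlock{\cdot}$ (programmers cannot write $\GhostLock$ and no proof rule handles it), no configuration reachable while the prefix is still executing is initialized. Once the prefix has terminated, one step desugars the trailing $\InitializeBlock{\cdot}$ into $\DeclareLockBlock{\GhostLock}{\cdot}$ with nothing sequenced after it, so from then on every reachable command has the form $\DeclareLockBlock{\GhostLock}{\cdot}$ (any nested ghost-lock declarations are irrelevant for $\isInit{\cdot}$), and the only rule that can remove the outermost such declaration is \Rule{LockS}, turning $\DeclareLockBlock{\GhostLock}{\Skip}$ into the stuck command $\Skip$. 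Hence $\isInit{\cdot}$ holds for every configuration from the first initialized one up to, and possibly excluding, the last one; by \itref{im5} this covers all $i \in [m,n)$. Note that this fails exactly for programs in which the top-level $\InitializeBlock{\cdot}$ is followed by further code, so the hypothesis cannot be dropped. Making this argument precise requires a careful induction over the reduction rules, tracking the leftmost-redex structure of the command.

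Finally I would assemble the path. By \itref{im5} and $m \geq 2$, the index $m-1$ satisfies $m-1 \geq 1$, $\neg\isInit{\CC_{m-1}}$, and $\isInit{\CC_m}$; applying the second case of $\RSucc$ to the step $(\Conf{\CC_{m-1}}{\Stack_{m-1}}{\Heap_{m-1}}) \OpRel (\Conf{\CC_m}{\Stack_m}{\Heap_m})$ gives $\valid \AbsInit(\AbsLocsOf{\Heap_m})$. For each $i \in [m,n)$, the previous step yields $\isInit{\CC_i}$, so the first case of $\RSucc$ applied to $(\Conf{\CC_i}{\Stack_i}{\Heap_i}) \OpRel (\Conf{\CC_{i+1}}{\Stack_{i+1}}{\Heap_{i+1}})$ gives $\valid \AbsNext(\AbsLocsOf{\Heap_i},\AbsLocsOf{\Heap_{i+1}})$. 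Since $n \geq m$, the sequence $\AbsLocsOf{\Heap_m}\ldots\AbsLocsOf{\Heap_n}$ is non-empty and meets exactly the two defining conditions of a path of $\AbsTS$, so $\AbsLocsOf{\Heap_m}\ldots\AbsLocsOf{\Heap_n} \in \AbsPaths$.
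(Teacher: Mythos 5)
Your proposal is correct and follows essentially the same route as the paper: the paper also unfolds the $\RSucc$-safety predicate along the execution (organized as an induction on $n$ rather than your iterative unrolling, but with the same content), extracts $\valid \AbsInit(\AbsLocsOf{\Heap_m})$ from the step $m-1 \to m$ via the second case of $\RSucc$ and $\valid \AbsNext(\AbsLocsOf{\Heap_i},\AbsLocsOf{\Heap_{i+1}})$ from the subsequent steps via the first case, and uses continuous initialization together with \itref{im5} to justify that $\isInit{\cdot}$ persists from index $m$ onward. Your treatment of the persistence of initialization and of the well-definedness of $\AbsLocsOfName$ is in fact more explicit than the paper's, which asserts these points without elaboration.
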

\begin{proof}
By induction on $n$.
\paragraph{Induction base.} Assume \itref{im1}-\itref{im5} hold for $n = m = 2$.
By \itref{im3}, there exist heaps $\PHeap$, $\HeapS$ such that 
\begin{align}
  & \Heap_1 \eeq \PHeap \Add \HeapS \\
  & \Stack_1, \PHeap \eeq \PP \\
  & \Stack_1, \HeapS \models 
    \bigstar_{\Lock \in \Locks \setminus \Locked{\CC_1}} \LockEnv(\Lock) \\
\end{align}
Hence, by \itref{im2}, 
\begin{align}
  \Safe{\RSucc}{3}{\CC_1}{\Stack_1}{\PHeap}{\LockEnv}{\PQ}~.
\end{align}
Now, consider the transition
\begin{align}
  \Conf{\CC_1}{\Stack_1}{\PHeap \Add \HeapS \Add \EmptyHeap} \OpRel \ldots \OpRel \Conf{\CC_2}{\Stack_2}{\Heap_2}
\end{align}
which exists by \itref{im4}. By \itref{im5}, we have
$\isInit{\CC_1}$ does not hold and $\isInit{\CC_2}$ holds.
Hence, by \saferef{(4c)} and definition of $\RSucc$, 
\begin{align}
\valid \AbsInit(\AbsLocsOf{\Heap_2})
\end{align}
and thus also $\AbsLocsOf{\Heap_2} \in \AbsPaths$.

\paragraph{Induction hypothesis.}
For an arbitrary, but fixed, $n \geq 2$, assume that \itref{im1}-\itref{im5} imply
$\AbsLocsOf{\Heap_m} \ldots \AbsLocsOf{\Heap_n} \in \AbsPaths$.

\paragraph{Induction step.}
Assume that \itref{im1}-\itref{im5} hold for some $2 \leq m \leq n+1$, where $n \geq 2$.
The case $m = n + 1$ is analogous to the induction base. 
Hence, assume $m < n +1$.
By \itref{im3}, there exist heaps $\PHeap$, $\HeapS$ such that 
\begin{align}
  & \Heap_1 \eeq \PHeap \Add \HeapS \\
  & \Stack_1, \PHeap \eeq \PP \\
  & \Stack_1, \HeapS \models 
    \bigstar_{\Lock \in \Locks \setminus \Locked{\CC_1}} \LockEnv(\Lock) \\
\end{align}
Hence, by \itref{im2}, 
\begin{align}
  \Safe{\RSucc}{n+3}{\CC_1}{\Stack_1}{\PHeap}{\LockEnv}{\PQ}~.
\end{align}
By repeated unfolding the above predicate, 
we know that there exist $\PHeap'$ and $\HeapS'$ such that
\begin{align}
  & \Heap_n \eeq \PHeap' \Add \HeapS' \\
  & \Safe{\RSucc}{2}{\CC_n}{\Stack_n}{\PHeap'}{\LockEnv}{\PQ} \\
  & \Stack_1, \HeapS \models 
    \bigstar_{\Lock \in \Locks \setminus \Locked{\CC_n}} \LockEnv(\Lock) \\
\end{align}
Now, consider the transition below, which exists by \itref{im4}.
\begin{align}
  \Conf{\CC_n}{\Stack_n}{\PHeap' \Add \HeapS' \Add \EmptyHeap} \OpRel \ldots \OpRel \Conf{\CC_{n+1}}{\Stack_{n+1}}{\Heap_{n+1}}
\end{align}
Since $m < n + 1$, \itref{im1} and \itref{im5} yield that $\isInit{\CC_n}$ holds.
By  \saferef{(4c)} and definition of $\RSucc$, this means
\begin{align}
    \valid \AbsNext(\AbsLocsOf{\PHeap_n},\AbsLocsOf{\PHeap_{n+1}})
\end{align}
By I.H., we also know that
\begin{align}
  \AbsLocsOf{\Heap_m} \ldots \AbsLocsOf{\Heap_n} \in \AbsPaths
\end{align}
Hence,
\begin{align}
  \AbsLocsOf{\Heap_m} \ldots \AbsLocsOf{\Heap_n}\AbsLocsOf{\Heap_{n+1}} \in \AbsPaths,
\end{align}
which finishes the proof. \qed
\end{proof}

\end{document}